\newtheorem{theorem}{Theorem}
\newtheorem{lemma}{Lemma}
\newcounter{constraint}
\title{Scheduling Coflows for Minimizing the Makespan in Identical Parallel Networks}
\author{Chi-Yeh~Chen and Jun Chen 
\\ Department of Computer Science and Information
Engineering, \\ National Cheng Kung University, \\
Taiwan, ROC. \\
chency@csie.ncku.edu.tw, joe210150@gmail.com.}
\begin{document}

\maketitle
\begin{abstract}
With the rapid advancement of technology, parallel computing applications have become increasingly popular and are commonly executed in large data centers. These applications involve two phases: computation and communication, which are executed repeatedly to complete the work. However, due to the ever-increasing demand for computing power, large data centers are struggling to meet the massive communication demands. To address this problem, coflow has been proposed as a networking abstraction that captures communication patterns in data-parallel computing frameworks. This paper focuses on the coflow scheduling problem in identical parallel networks, where the primary objective is to minimize the makespan, which is the maximum completion time of coflows. It is considered one of the most significant $\mathcal{NP}$-hard problems in large data centers. In this paper, we consider two problems: flow-level scheduling and coflow-level scheduling. In the flow-level scheduling problem, distinct flows can be transferred through different network cores, whereas in the coflow-level scheduling problem, all flows must be transferred through the same network core. To address the flow-level scheduling problem, this paper proposes two algorithms: a $(3-\tfrac{2}{m})$-approximation algorithm and a $(\tfrac{8}{3}-\tfrac{2}{3m})$-approximation algorithm, where $m$ represents the number of network cores. For the coflow-level scheduling problem, this paper proposes a $(2m)$-approximation algorithm. Finally, we conduct simulations on our proposed algorithm and Weaver's algorithm, as presented in Huang \textit{et al.} (2020) in the 2020 IEEE International Parallel and Distributed Processing Symposium (IPDPS). We also validate the effectiveness of the proposed algorithms on heterogeneous parallel networks.
\begin{keywords}
Coflow scheduling, identical parallel networks, makespan, data center, approximation algorithm.
\end{keywords}
\end{abstract}

\section{Introduction}\label{section:introduction}
In recent years, the rapid growth in data volumes and the rise of cloud computing have revolutionized software systems and infrastructure. Numerous applications now dealing with large datasets sourced from diverse origins, presenting a formidable challenge in terms of efficient and prompt data handling. Consequently, the utilization of parallel computing applications has gained significant traction in large-scale data centers, as a means to tackle this pressing concern.

Data-parallel computation frameworks, including widely used ones like MapReduce~\cite{MapReduce}, Hadoop~\cite{Hadoop}, and Dyrad~\cite{Dryad}, offer the flexibility for applications to seamlessly transition between computation and communication stages. During the computation stage, intermediate data is generated and exchanged between sets of servers via the network. Subsequently, the communication stage involves the transfer of a substantial collection of flows, and the computation stage can only commence once all flows from the previous communication stage have been completed. Nevertheless, traditional networking approaches primarily prioritize optimizing flow-level performance rather than considering application-level performance metrics~\cite{Qiu}. Notably, in application-level performance metrics, the completion time of a job is determined solely by the last flow to finish the communication phase, disregarding any flows that may have completed earlier within the same stage. To tackle this issue, Chowdhury and Stoica~\cite{Coflow_cluster} introduced the concept of coflow abstraction, which takes into account application-level communication patterns for more comprehensive management and optimization.

A coflow, as defined by Qiu et al.~\cite{Qiu}, represents a collection of parallel flows that share a common performance goal. The data center is modeled as a non-blocking switch, depicted in Figure \ref{fig:non-blocking switch}, consisting of $N$ input and output ports, with the switch serving as a network core. The input ports facilitate data transfer from source servers to the network, while the output ports transfer data from the network to destination servers. Each coflow can be represented by an $N \times N$ demand matrix, where each element $d_{i,j}$ denotes the data volume transferred from input $i$ to output $j$ for the corresponding flow $(i,j)$. The capacity of all network links is assumed to be uniform, and capacity constraints apply to both input and output ports. The paper focuses on the coflow scheduling problem in identical parallel networks, aiming to minimize the makespan, which refers to the maximum completion time of all coflows. This problem is recognized as one of the most significant $\mathcal{NP}$-hard problems encountered in large-scale data centers. To address this challenge, the paper introduces approximation algorithms for both flow-level scheduling and coflow-level scheduling problems and evaluates their performance against existing algorithms through simulations.

\begin{figure}[!h]
\centering
\includegraphics[scale=0.2]{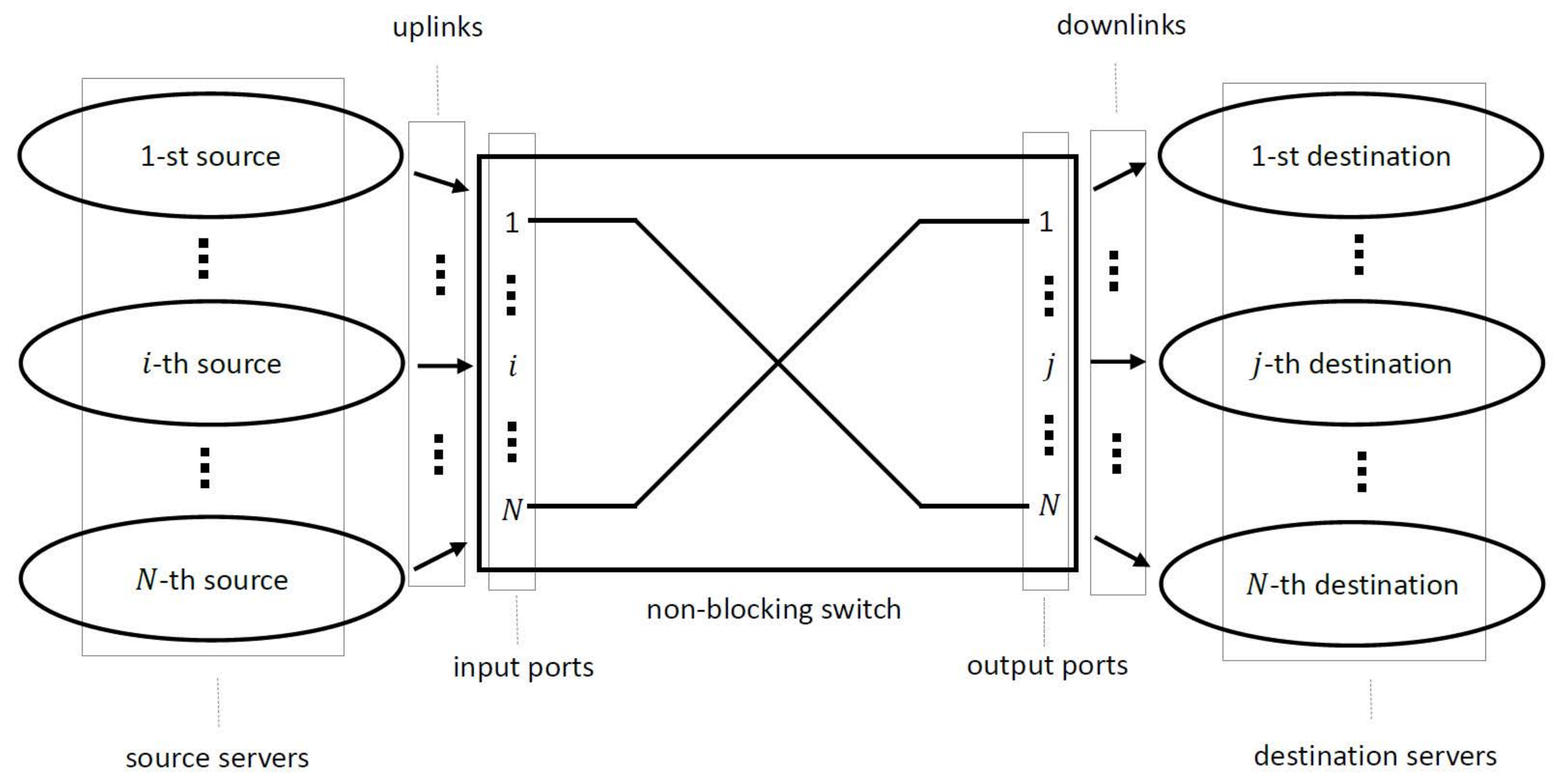}
\caption{A giant $N \times N$ non-blocking switch (network core).}
\label{fig:non-blocking switch}
\end{figure}

Previous studies on coflow scheduling problems~\cite{Qiu, Shafiee2, Shafiee} have predominantly focused on a single-core model. This model has been considered practical due to its utilization of topological designs such as Fat-tree or Clos networks~\cite{al2008scalable, greenberg2009vl2}, which facilitate the construction of data center networks with full bisection bandwidth. However, as technology trends evolve and computation networks become more intricate, the single-core model becomes insufficient in meeting the evolving requirements. It has been observed that modern data centers often employ multiple generations of networks simultaneously~\cite{singh2015jupiter} to bridge the gap in network speeds. As a result, our focus shifts towards identical parallel networks, where coflows can be transmitted through multiple identical network cores and processed in parallel. In this paper, the completion time of a coflow is defined as the time taken for the last flow within the coflow to complete. The objective of this paper is to schedule coflows in identical parallel networks, aiming to minimize the makespan, which represents the maximum completion time among all coflows.

This paper addresses the concept of coflow, which encompasses two distinct scheduling problems: flow-level scheduling and coflow-level scheduling. In the flow-level scheduling problem, individual flows can be distributed among multiple network cores for transmission. In contrast, the coflow-level scheduling problem limits the transmission of its constituent flows to a single network core. These two problems, representing different levels of granularity, capture the scheduling complexities associated with coflows.


\subsection{Our Contributions}
This paper addresses the coflow scheduling problem in identical parallel networks with the objective of minimizing the makespan. In the flow-level scheduling problem, we propose a $(3-\frac{2}{m})$-approximation algorithm as well as a $(\frac{8}{3}-\frac{2}{3m})$-approximation algorithm, where $m$ represents the number of network cores. Additionally, for the coflow-level scheduling problem, we present a $(2m)$-approximation algorithm.

\subsection{Organization}
We structure the remaining sections as follows. Section \ref{section:related-work} provides an overview of several related works. In Section \ref{section:notation-and-preliminaries}, we present the fundamental notations and preliminaries used in this paper. Our main results are presented in Section \ref{section:approximation-algorithm-for-divisible-coflow-scheduling} and Section \ref{section:approximation-algorithm-for-indivisible-coflow-scheduling}, where we provide two approximation algorithms for flow-level scheduling in Section \ref{section:approximation-algorithm-for-divisible-coflow-scheduling}, and one approximation algorithm for coflow-level scheduling in Section \ref{section:approximation-algorithm-for-indivisible-coflow-scheduling}. Subsequently, in Section \ref{section:experiments}, we conduct experiments to evaluate and compare the performance of our proposed algorithms with Weaver's \cite{Weaver}. Finally, in Section \ref{section:conclusion}, we present our conclusions.

\section{Related Work}\label{section:related-work}
In the literature, numerous heuristic algorithms have been proposed to tackle coflow scheduling problems, such as those discussed in \cite{Chowdhury, Varys, Hasnain, Shen}. Mosharaf \textit{et al.}~\cite{Varys} introduced a Smallest-Effective-Bottleneck-First heuristic that allocates coflows greedily based on the maximum server loads. They then utilized the Minimum-Allocation-for-Desired-Duration algorithm to assign rates to the corresponding flows. Dian \textit{et al.} \cite{Shen} conducted simulations to address the joint problem of coflow scheduling and virtual machine placement. They proposed a heuristic approach aimed at minimizing the completion time of individual coflows. Additionally, Chowdhury \textit{et al.} \cite{Chowdhury} presented a scheduler called Coflow-Aware Least-Attained Service, which operates without prior knowledge of coflows.

The concurrent open shop problem has been proven to be $\mathcal{NP}$-complete to approximate within a factor of $2-\epsilon$ for any $\epsilon>0$, in the absence of release times, as demonstrated in \cite{Sachdeva, Shafiee}. Interestingly, it is worth noting that each concurrent open shop problem can be reduced to a coflow scheduling problem. Consequently, the coflow scheduling problem is also NP-complete to approximate within a factor of $2-\epsilon$ for any $\epsilon>0$, when release times are not considered~\cite{Ahmadi, Shafiee}.

In the context of minimizing the total weighted completion time in identical parallel networks, Chen \cite{CYChen} proposed several approximation algorithms for the coflow scheduling problem under different conditions. For the flow-level scheduling problem, Chen devised an algorithm that achieved a $(6-\frac{2}{m})$-approximation with release time and a $(5-\frac{2}{m})$-approximation without release time, where $m$ represents the number of network cores. This algorithm employed an iterative approach to schedule all flows based on the completion time order of coflows, computed using a linear program. Subsequently, the algorithm assigned each flow to the least loaded network core to minimize the flow's completion time.

Regarding the coflow-level scheduling problem, Chen developed an algorithm that achieved a $(4m+1)$-approximation with release time and a $(4m)$-approximation without release time. Similar to the approach for divisible coflows, this algorithm also employed an iterative strategy to schedule flows based on the completion time order of coflows, computed using a linear program. Subsequently, the algorithm assigned each coflow to the least loaded network core to minimize the coflow's completion time.

In the context of coflow scheduling problems with precedence constraints, Chen \cite{CYChen_prec} also proposed two approximation algorithms for the aforementioned four conditions. In these cases, the approximation ratio for each condition, considering precedence constraints, is equal to the approximation ratio for each condition without precedence constraints, multiplied by a factor of $\mu$. Here, $\mu$ represents the coflow number of the longest path in the precedence graph.

In the domain of coflow scheduling problems for minimizing the total weighted completion time in a single network core, several related works have been proposed~\cite{Qiu, Shafiee2, Shafiee}. Qiu \textit{et al.} \cite{Qiu} introduced deterministic approximation algorithms achieving a $\frac{67}{3}$-approximation and a $\frac{64}{3}$-approximation with release time and without release time, respectively. Additionally, they obtained randomized approximation algorithms resulting in a $(9+\frac{16\sqrt{2}}{3})$-approximation and an $(8+\frac{16\sqrt{2}}{3})$-approximation with release time and without release time, respectively. The deterministic and randomized algorithms share a similar framework. They both approached the problem by relaxing it to a polynomial-sized interval-indexed linear program (LP), which provided an ordered list of coflows. Subsequently, the coflows were grouped based on their minimum required completion times from the ordered list. The algorithms scheduled coflows within the same time interval as a single coflow, utilizing matchings obtained through the Birkhoff-von Neumann decomposition theorem. The distinguishing factor between the deterministic and randomized algorithms lies in the selection of the time interval. The deterministic algorithm employed a fixed time point, while the randomized algorithm opted for a random time point.
However, Ahmadi \textit{et al.} \cite{Ahmadi} discovered that their approaches only yielded a deterministic $\frac{76}{3}$-approximation algorithm with release time. On the other hand, Shafiee \textit{et al.} \cite{Shafiee} presented the best-known results in recent work, proposing a deterministic $5$-approximation algorithm with release time and a $4$-approximation algorithm without release time. The deterministic algorithm employed a straightforward list scheduling strategy based on the order of the coflows' completion time, computed using a relaxed linear program that utilized ordering variables.

When addressing the coflow scheduling problem with the objective of minimizing the makespan in heterogeneous parallel networks, Huang \textit{et al.} \cite{Weaver} proposed an $O(m)$-approximation algorithm called Weaver, where $m$ represents the number of network cores. The Weaver algorithm scheduled all flows iteratively based on their size in descending order. Subsequently, it classified the flows into two categories: critical and non-critical. For critical flows, the algorithm selected a network that minimizes the coflow completion time. On the other hand, for non-critical flows, it selected a network to balance the load.

Furthermore, Chen \cite{CYChen_het} further improved upon the previous results by introducing an $O(\frac{\log m}{\log \log m})$-approximation algorithm. As a preprocessing step, Chen modified the makepan scheduling problem instance to contain only a small number of groups. In the first stage of preprocessing, any network cores that were at most $\frac{1}{m}$ times the speed of the fastest network core, where $m$ is the number of network cores, were discarded. The second stage of preprocessing involved dividing the remaining network cores into groups based on their similar speeds. Following the preprocessing step, Chen implemented the list algorithm to identify the least loaded network core and assigned flows to it. Additionally, Chen obtained an $O(\frac{\log m}{\log \log m})$-approximation algorithm for minimizing the total weighted completion time.

\section{Notation and Preliminaries}\label{section:notation-and-preliminaries}
Our work focuses on the abstraction of identical parallel networks, which is an architecture consisting of identical network cores operating in parallel. We consider the identical parallel networks as a set $\mathcal{M}$ of $m$ giant $N \times N$ non-blocking switches, where each switch represents a network core. These switches have $N$ input ports and $N$ output ports. The input ports are responsible for transferring data from source servers to the network, while the output ports transfer data from the network to destination servers. In the network, there are $N$ source servers, where the $i$-th source server is connected to the $i$-th input port of each parallel network core. Similarly, there are $N$ destination servers, where the $j$-th destination server is linked to the $j$-th output port of each core. As a result, each source server has $m$ synchronized uplinks, and each destination server has $m$ synchronized downlinks. We model the network core as a bipartite graph, with the set $\mathcal{I}$ representing the source servers on one side and the set $\mathcal{J}$ representing the destination servers on the other side. Capacity constraints apply to both the input and output ports, allowing for the transfer of one data unit per one-time unit through each port. For simplicity, we assume that the capacity of all links within each network core is uniform, meaning that all links within a core have the same speed rate.

A coflow represents a collection of independent flows that share a common performance objective. Let $\mathcal{K}$ denote the set of coflows. Each coflow $k \in \mathcal{K}$ can be represented as an $N \times N$ demand matrix $D^{(k)}$. It is important to note that each individual flow can be identified by a triple $(i,j,k)$, where $i \in \mathcal{I}$ represents the source node, $j \in \mathcal{J}$ represents the destination node, and $k \in \mathcal{K}$ denotes the corresponding coflow. The size of the flow $(i,j,k)$ is denoted as $d_{i,j,k}$, which corresponds to the $(i,j)$-th element of the demand matrix $D^{(k)}$. Each value $d_{i,j,k} \in D^{(k)}$ represents the amount of data transferred by the flow $(i,j,k)$ from input $i$ to output $j$. Furthermore, in our problem formulation, we assume that the sizes of flows are discrete and represented as integers. To simplify the problem, we consider all flows within a coflow to arrive simultaneously in the system, as described in \cite{Qiu}.

Let $C_k$ denote the completion time of coflow $k \in \mathcal{K}$. The completion time of a coflow is defined as the time when the last flow in the coflow finishes. Our objective is to schedule coflows in identical parallel networks to minimize the makespan $T = \max \limits_{\forall k \in \mathcal{K}} C_k$, which represents the maximum completion time among all coflows. Table \ref{tableN&T} provides an overview of the notation and terminology utilized in this paper.

\begin{table}[!h]
\caption{Notation and Terminology.}
\centering
\renewcommand\arraystretch{1.5}
\begin{tabular}{||c|p{5in}||}
\hline
\textbf{Symbol} & \textbf{Meaning} \\ \hline
$m$                & The number of network cores. \\ \hline
$N$                & The number of input/output ports. \\ \hline
$K$                & The number of coflows. \\ \hline
$\mathcal{M}$      & The set of network cores. $\mathcal{M}=\{1, 2, \ldots, m\}$ \\ \hline
$\mathcal{I}$      & The source sever set. $\mathcal{I}=\{1, 2, \ldots, N\}$ \\ \hline
$\mathcal{J}$      & The destination server set. $\mathcal{J}=\{1, 2, \ldots, N\}$ \\ \hline
$\mathcal{K}$      & The set of coflows. $\mathcal{K}=\{1, 2, \ldots, K\}$ \\ \hline
$\mathcal{F}$      & The set of flows from all coflows $\mathcal{K}$. \\ \hline
$D^{(k)}$          & The demand matrix of coflow $k$. \\ \hline
$d_{i,j,k}$      & The size of the flow to be transferred from input $i$ to output $j$ in coflow $k$. \\ \hline
$L_{i,k}, L_{j,k}$      & $L_{i,k} = \sum_{j=1}^{N} d_{i,j,k}$ is the total amount of data that coflow $k$ has to transfer through input port $i$, and $L_{j,k} = \sum_{i=1}^{N} d_{i,j,k}$ is the total amount of data that coflow $k$ has to transfer through output port $j$. \\ \hline
$s_h$              & The speed factor of network core $h$. \\ \hline
$C_k$              & The completion time of coflow $k$. \\ \hline
$T$                & The makespan, the maximum of the completion time of coflows. \\ \hline
\end{tabular}
\label{tableN&T}
\end{table}

\section{Approximation Algorithm for flow-level Scheduling}\label{section:approximation-algorithm-for-divisible-coflow-scheduling}
In this section, we specifically address the scenario where coflows are considered in flow level, allowing for the transfer of individual flows through different network cores. Our focus is on a solution that operates at the flow level, prohibiting flow splitting. This means that data belonging to the same flow can only be assigned to a single network core (as discussed in \cite{Weaver}).

\subsection{Algorithm}
In this subsection, we introduce two algorithms for the flow-level scheduling problem. One is flow-list-scheduling (FLS) described in Algorithm \ref{FLS}, and the other is flow-longest-processing-time-first-scheduling (FLPT) described in Algorithm \ref{FLPT}. The algorithm referred to as FLS (Algorithm \ref{FLS}) is outlined below. Let $\mathcal{F}$ represent the set of flows obtained from all coflows within the coflow set $\mathcal{K}$. For each flow $(i, j, k) \in \mathcal{F}$, our algorithm examines all flows that share congestion with $(i, j, k)$ and are scheduled prior to $(i, j, k)$. Subsequently, flow $(i, j, k)$ is assigned to the core $h \in \mathcal{M}$ with the least workload, thereby minimizing the completion time of flow $(i, j, k)$. Lines \ref{FLSfor1_s}-\ref{FLSfor1_f} determine the core with the minimum load and assign the flow to it. In terms of time complexity, Algorithm \ref{FLS} (FLS) involves scanning each flow in $\mathcal{F}$ (line \ref{FLSfor1_s} in Algorithm \ref{FLS}), which amounts to $|\mathcal{F}|$ iterations. For each flow, a comparison is made among $m$ cores to identify the least loaded core (line \ref{FLS_min} in Algorithm \ref{FLS}). Consequently, the time complexity of FLS is $O(m|\mathcal{F}|)$.

\begin{algorithm}[h]
 \caption{flow-list-scheduling} \label{FLS}
 \begin{algorithmic}[1]
   \Require a set $\mathcal{F}$, which contains of all flows $(i,j,k)$, $\forall i \in \mathcal{I}, \forall j \in \mathcal{J}, \forall k \in \mathcal{K}$ 
	 \State let $load_{I}{(i,h)}$ be the load on the $i$-th input port of the core $h$ \label{FLS_init}
   \State let $load_{O}{(j,h)}$ be the load on the $j$-th output port of the core $h$
   \State let $\mathcal{A}_h$ be the set of flows allocated to the core $h$
   \State initialize both $load_{I}$ and $load_{O}$ to 0 and $\mathcal{A}_h = \emptyset$ for all $h \in \mathcal{M}$
   \For{each flow $(i, j, k) \in \mathcal{F}$} \label{FLSfor1_s}
      \State $h^* = \arg\min_{h \in \mathcal{M}} \left\{load_{I}{(i,h)}+load_{O}{(j,h)}\right\}$ \label{FLS_min}
      \State $\mathcal{A}_{h^*} = \mathcal{A}_{h^*} \cup \{(i, j, k)\}$
      \State $load_{I}{(i,h^*)}=load_{I}{(i,h^*)} + d_{i, j, k}$ 
			\State $load_{O}{(j,h^*)}=load_{O}{(j,h^*)} + d_{i, j, k}$
   \EndFor \label{FLSfor1_f}
	 \State \textbf{return} $\left\{\mathcal{A}_h\right\}$ for $h \in \mathcal{M}$
 \end{algorithmic}
\end{algorithm}

The algorithm known as FLPT (Algorithm \ref{FLPT}) is presented below. It should be noted that Algorithm \ref{FLPT} is nearly identical to Algorithm \ref{FLS}. The key difference lies in line \ref{FLPTfor1_s} of the algorithm. Specifically, lines \ref{FLPTfor1_s}-\ref{FLPTfor1_f} involve sorting the flows $(i, j, k) \in \mathcal{F}$ in non-increasing order based on the value of $d_{i, j, k}$. Subsequently, the algorithm identifies the core with the least workload and assigns the flow to it.
In terms of time complexity, Algorithm \ref{FLPT} (FLPT) begins by spending a runtime complexity of $O(|\mathcal{F}| \log |\mathcal{F}|)$ to sort the flows (line \ref{FLPTfor1_s} in Algorithm \ref{FLPT}). Afterwards, FLPT follows the same procedure as FLS. Consequently, the time complexity of FLPT can be expressed as $O(m|\mathcal{F}| + |\mathcal{F}| \log |\mathcal{F}|)$.

\begin{algorithm}[h]
 \caption{flow-longest-processing-time-first-scheduling} \label{FLPT}
 \begin{algorithmic}[1]
   \Require a set $\mathcal{F}$, which contains of all flows $(i,j,k)$, $\forall i \in \mathcal{I}, \forall j \in \mathcal{J}, \forall k \in \mathcal{K}$
   \State let $load_{I}{(i,h)}$ be the load on the $i$-th input port of the core $h$ \label{FLPT_init}
   \State let $load_{O}{(j,h)}$ be the load on the $j$-th output port of the core $h$
   \State let $\mathcal{A}_h$ be the set of flows allocated to the core $h$
   \State initialize both $load_{I}$ and $load_{O}$ to 0 and $\mathcal{A}_h = \emptyset$ for all $h \in \mathcal{M}$
   \For{each flow $(i, j, k) \in \mathcal{F}$ in non-increasing order of $d_{i, j, k}$, breaking ties arbitrarily} \label{FLPTfor1_s}
      \State $h^* = \arg\min_{h \in \mathcal{M}} \left\{load_{I}{(i,h)}+load_{O}{(j,h)}\right\}$
      \State $\mathcal{A}_{h^*} = \mathcal{A}_{h^*} \cup \{(i, j, k)\}$
      \State $load_{I}{(i,h^*)}=load_{I}{(i,h^*)} + d_{i, j, k}$
			\State $load_{O}{(j,h^*)}=load_{O}{(j,h^*)} + d_{i, j, k}$
   \EndFor \label{FLPTfor1_f}
   \State \textbf{return} $\left\{\mathcal{A}_h\right\}$ for $h \in \mathcal{M}$
 \end{algorithmic}
\end{algorithm}

\subsection{Analysis}
This subsection shows that Algorithm \ref{FLS} achieves $(3-\tfrac{2}{m})$-approximation ratio and Algorithm \ref{FLPT} achieves $(\tfrac{8}{3}-\tfrac{2}{3m})$-approximation ratio, where $m$ is the number of network cores. An intuitive lower bound on the optimal solution cost is
\begin{eqnarray}\label{lower}
\frac{\max \left\{\max_{i} \sum_{k} L_{i, k}, \max_{j} \sum_{k} L_{j, k}\right\}}{m}
\end{eqnarray}
where $L_{i,k} = \sum_{j=1}^{N} d_{i,j,k}$, and $L_{j,k} = \sum_{i=1}^{N} d_{i,j,k}$.
First, the following lemma for Algorithm \ref{FLS} is obtained:

\begin{lemma}\label{lem_FLS}
Let $T^*$ be the cost of an optimal solution, and let $T$ denote the makespan in the schedule found by FLS (Algorithm \ref{FLS}). Then, 
\begin{eqnarray*}
T \leq \left(3-\frac{2}{m}\right) T^*.
\end{eqnarray*}
\end{lemma}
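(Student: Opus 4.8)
The plan is to collapse the makespan of the FLS schedule to a single load quantity and then run a Graham-style list-scheduling argument on it. First I would observe that, since all coflows are released at time $0$, the makespan $T=\max_k C_k$ is simply the instant at which the last flow anywhere finishes, i.e.\ $T=\max_{h}(\text{finish time of core }h)$. Given the assignment $\{\mathcal{A}_h\}$ produced by FLS, the flows routed to a single core form a bipartite demand graph, and by a Birkhoff--von Neumann / bipartite edge-coloring decomposition the optimal within-core schedule finishes in time exactly $\max\{\max_i load_{I}(i,h),\ \max_j load_{O}(j,h)\}$. Hence
\[
T=\max_{h\in\mathcal{M}}\max\Bigl\{\max_i load_{I}(i,h),\ \max_j load_{O}(j,h)\Bigr\},
\]
so the whole problem reduces to assigning flows to cores so as to minimize the largest port load.

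Next I would isolate the port attaining this maximum; by symmetry assume it is input port $i$ on core $h^*$, so $T=load_{I}(i,h^*)$. Let $(i,j,k)$ be the last flow FLS placed on that port, and write the loads recorded just before this placement with a prime, so that $T=load'_{I}(i,h^*)+d_{i,j,k}$. The crucial input is the greedy rule: $(i,j,k)$ was sent to the core minimizing $load_{I}(i,\cdot)+load_{O}(j,\cdot)$, whence $load'_{I}(i,h^*)+load'_{O}(j,h^*)\le load'_{I}(i,h)+load'_{O}(j,h)$ for every $h$. Summing this over all $m$ cores, and using that the primed input (resp.\ output) loads of port $i$ (resp.\ $j$) summed over cores count only flows placed before $(i,j,k)$ and therefore miss at least the contribution $d_{i,j,k}$ of the flow itself, I obtain
\[
m\bigl(load'_{I}(i,h^*)+load'_{O}(j,h^*)\bigr)\le\Bigl(\textstyle\sum_k L_{i,k}-d_{i,j,k}\Bigr)+\Bigl(\textstyle\sum_k L_{j,k}-d_{i,j,k}\Bigr).
\]
Dropping the nonnegative term $load'_{O}(j,h^*)$ bounds $load'_{I}(i,h^*)$, and adding back $d_{i,j,k}$ gives
\[
T\le\frac{\sum_k L_{i,k}+\sum_k L_{j,k}}{m}+\Bigl(1-\tfrac{2}{m}\Bigr)d_{i,j,k}.
\]

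Finally I would close with the two lower bounds on $T^*$: the load bound \eqref{lower} gives $\sum_k L_{i,k}\le mT^*$ and $\sum_k L_{j,k}\le mT^*$, while every individual flow must traverse a port, so $d_{i,j,k}\le T^*$. Substituting (the coefficient $1-\tfrac{2}{m}$ being nonnegative for $m\ge2$) yields $T\le 2T^*+(1-\tfrac{2}{m})T^*=(3-\tfrac{2}{m})T^*$, and the case $m=1$ is immediate since FLS is then exactly optimal. The only real subtlety I anticipate is the first step, namely justifying that the realized makespan equals the maximum port load through the decomposition, since the algorithm specifies only the assignment and not the intra-core timing. The refinement responsible for the $-\tfrac{2}{m}$ saving over the naive bound of $3$ is the observation that the critical flow $(i,j,k)$ loads both port $i$ and port $j$, so it is subtracted twice in the summation, turning the coefficient of $d_{i,j,k}$ from $1$ into $1-\tfrac{2}{m}$.
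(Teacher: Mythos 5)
Your proof is correct and follows essentially the same route as the paper's: identify the critical last flow $f$ on link $(i,j)$, use the greedy core-selection rule averaged over the $m$ cores to bound its starting load by $\frac{1}{m}\bigl(\sum_k L_{i,k}+\sum_k L_{j,k}-2d_f\bigr)$, and close with the load lower bound \eqref{lower} together with $d_f\le T^*$. The only difference is presentational: you explicitly justify the reduction from makespan to maximum port load via the Birkhoff--von Neumann/K\H{o}nig decomposition and take $f$ to be the last flow assigned to the critical port, whereas the paper compresses this step into inequality \eqref{FLS_p4} with a citation to the list-scheduling argument of \cite{williamson_shmoys_2011}.
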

\begin{proof}
Let $F_i$ be the flow set of input port $i$, $F_j$ be the flow set of output port $j$. According to the lower bound (\ref{lower}), we know that 
\begin{align}
\frac{1}{m} \sum_{f \in F_i} d_f &\leq T^*, & &\forall i \in \mathcal{I} \label{FLS_p1} \\
\frac{1}{m} \sum_{f \in F_j} d_f &\leq T^*, & &\forall j \in \mathcal{J} \label{FLS_p2} \\
d_f &\leq T^*, & &\forall f \in \mathcal{F}. \label{FLS_p3}
\end{align}
Assume that the last flow in the schedule of FLS is the flow $f$, and the flow $f$ is sent via link $(i,j)$. We have 
\begin{eqnarray}
T  &\leq & \frac{1}{m} \sum_{f^{'} \in F_i \setminus \{f\}} d_{f^{'}} + \frac{1}{m} \sum_{f^{'} \in F_j \setminus \{f\}} d_{f^{'}} + d_f \label{FLS_p4} \\
   &\leq & 2\left(T^* - \frac{1}{m} d_f\right) + d_f \label{FLS_p5} \\
   &=    & 2T^* + \left(1-\frac{2}{m}\right)d_f \nonumber \\
   &\leq & \left(3-\frac{2}{m}\right)T^*. \label{FLS_p6}
\end{eqnarray}
The concept of inequality \eqref{FLS_p4} is similar to the proof of list scheduling in \cite{williamson_shmoys_2011}. The inequality \eqref{FLS_p5} is due to inequalities \eqref{FLS_p1} and \eqref{FLS_p2}. The inequality \eqref{FLS_p6} is based on the inequality \eqref{FLS_p3}, where $d_f \leq T^*$.
\end{proof}

Therefore, theorem \ref{thm_FLS} is derived from lemma \ref{lem_FLS}.
\begin{theorem}\label{thm_FLS}
FLS (Algorithm \ref{FLS}) has an approximation ratio of $3-\tfrac{2}{m}$, where $m$ is the number of network cores.
\end{theorem}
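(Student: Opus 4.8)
The plan is to obtain the theorem as an immediate corollary of Lemma \ref{lem_FLS}. Recall that the approximation ratio of an algorithm is, by definition, the supremum over all problem instances of the ratio between the cost of the algorithm's output and the cost of an optimal solution on that instance. Since Lemma \ref{lem_FLS} already establishes the bound $T \leq \left(3-\frac{2}{m}\right)T^*$ for the makespan $T$ returned by FLS, the only thing left to check is that this is a \emph{uniform} guarantee, holding for every instance of the flow-level scheduling problem rather than for one particular input.

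First I would observe that the proof of Lemma \ref{lem_FLS} nowhere exploits special structure of the input. Inequalities \eqref{FLS_p1} and \eqref{FLS_p2} are just instantiations of the generic lower bound \eqref{lower} at the bottleneck input port $i$ and output port $j$, inequality \eqref{FLS_p3} is the trivial observation that the optimal makespan is at least the size of any single flow, and the chain \eqref{FLS_p4}--\eqref{FLS_p6} applies verbatim to whichever flow $f$ happens to finish last in the FLS schedule. None of these facts depends on the particular demand matrices, so the bound holds for an arbitrary instance. It then follows directly that for every instance $T/T^* \leq 3 - \frac{2}{m}$, whence the worst-case ratio, i.e.\ the approximation ratio of FLS, is at most $3-\frac{2}{m}$.

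The step carrying the (very small) weight here is precisely this quantifier check: confirming that the argument of Lemma \ref{lem_FLS} is genuinely instance-independent and therefore translates from a per-instance inequality into a worst-case ratio. I do not expect any real obstacle, since all of the substantive combinatorial work -- bounding the load on the bottleneck input and output ports against the optimum via \eqref{lower}, and charging the final flow's own size through \eqref{FLS_p3} -- has already been discharged inside the lemma; the theorem merely repackages that bound in the language of approximation ratios.
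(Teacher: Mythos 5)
Your proposal matches the paper exactly: the paper derives Theorem \ref{thm_FLS} as an immediate corollary of Lemma \ref{lem_FLS}, with all the substantive work (the port-load lower bounds and the charging of the last flow) done inside the lemma. Your additional remark that the lemma's argument is instance-independent is a correct, if routine, observation that the paper leaves implicit.
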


Next, this paper shows that Algorithm \ref{FLPT} has a better approximation ratio than Algorithm \ref{FLS}. We consider the worst case that one flow will affect other flows at input port $i \in \mathcal{I}$ and output port $j \in \mathcal{J}$ on the same core, then other affected flows will keep affecting others. This causes all flows on the same core can not be sent from input port to output port in parallel. In other words, the load of combining input port $i \in \mathcal{I}$ and output port $j \in \mathcal{J}$ of each core $h \in \mathcal{M}$ is the sum of all flows on core $h$. Therefore, the following lemma for Algorithm \ref{FLPT} is obtained:

\begin{lemma}\label{lem_FLPT}
Let $T^*$ be the cost of an optimal solution, and let $T$ denote the makespan in the schedule found by FLPT (Algorithm \ref{FLPT}). Then, 
\begin{eqnarray*}
T \leq \left(\frac{8}{3}-\frac{2}{3m}\right)T^*. 
\end{eqnarray*}
\end{lemma}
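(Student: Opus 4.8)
The plan is to follow the same greedy/averaging skeleton used for Lemma~\ref{lem_FLS} and then extract an additional saving from the non-increasing (LPT) order. First I would fix the critical flow $f$ on link $(i,j)$ that attains the makespan (reducing to the case where $f$ is the last flow processed, since later flows cannot lower $C_f$), and---exactly as in the proof of Lemma~\ref{lem_FLS}---use that FLPT assigns $f$ to a least-loaded core together with the port lower bounds \eqref{FLS_p1}--\eqref{FLS_p3} to recover the order-independent estimate
\begin{eqnarray*}
T \;\le\; 2T^{*} + \left(1-\tfrac{2}{m}\right)d_f .
\end{eqnarray*}
Since this bound increases in $d_f$, it is already strong when $d_f$ is small: for $d_f \le \tfrac{2}{3}T^{*}$ it gives $T \le \left(\tfrac{8}{3}-\tfrac{4}{3m}\right)T^{*} \le \left(\tfrac{8}{3}-\tfrac{2}{3m}\right)T^{*}$, so the small-flow regime costs nothing beyond what Lemma~\ref{lem_FLS} already yields.

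The real content is the regime $d_f > \tfrac{2}{3}T^{*}$, where I would exploit the LPT order. The guiding observation is that $\tfrac{8}{3}-\tfrac{2}{3m} = 2\left(\tfrac{4}{3}-\tfrac{1}{3m}\right)$ is exactly twice Graham's classical LPT bound for makespan on $m$ identical machines; the factor two is the price of a flow congesting both its input port and its output port. Concretely, I would start from the within-core serialization bound $T = C_f \le load_{I}(i,h^{*})+load_{O}(j,h^{*})$ (the worst case described just before the lemma) and try to control the two summands in a Graham-style manner: because flows are placed in non-increasing size order, every flow scheduled before $f$ has size at least $d_f>\tfrac{2}{3}T^{*}$, and then the port lower bounds force the optimum to place at most one such \emph{large} flow on any fixed (port, core) pair. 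A counting/exchange argument over these large flows would then bound the input congestion $load_{I}(i,h^{*})$ and the output congestion $load_{O}(j,h^{*})$ each by $\left(\tfrac{4}{3}-\tfrac{1}{3m}\right)T^{*}$, and summing the two gives the claimed $\left(\tfrac{8}{3}-\tfrac{2}{3m}\right)T^{*}$.

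The hard part will be the coupling between the two port constraints. Graham's one-dimensional LPT argument relies on each item being placed on the machine of minimum \emph{relevant} load, whereas FLPT sends each flow to the core minimizing the \emph{sum} $load_{I}(i,h)+load_{O}(j,h)$; consequently the congestion at $i$ and the congestion at $j$ are not balanced independently, and the clean ``at most two large items per machine, hence LPT is optimal'' step does not transfer verbatim. I expect the main work to be showing that, in the large-flow regime, this coupled greedy rule still cannot stack too much load at the critical core---either by arguing directly about $load_{I}(i,h^{*})$ and $load_{O}(j,h^{*})$ through the min-sum inequality evaluated at the core that the optimum uses for $f$, or by a careful case analysis on how many large flows can share port $i$ or port $j$ on $h^{*}$. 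Once that structural bound is established, combining it with the small-flow estimate of the first paragraph yields $T\le\left(\tfrac{8}{3}-\tfrac{2}{3m}\right)T^{*}$ in all cases.
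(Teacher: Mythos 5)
Your first case ($d_f \le \tfrac{2}{3}T^*$) is correct and is a clean use of the Lemma~\ref{lem_FLS} machinery, but the lemma is not proved: the entire content of the improvement from $3-\tfrac{2}{m}$ to $\tfrac{8}{3}-\tfrac{2}{3m}$ sits in your second case, and there you only describe a plan whose key step you yourself flag as unresolved. Concretely, you propose to bound $load_{I}(i,h^{*})$ and $load_{O}(j,h^{*})$ \emph{separately}, each by $\left(\tfrac{4}{3}-\tfrac{1}{3m}\right)T^{*}$, via a Graham-style counting argument on large flows. But FLPT never balances either port's load individually --- it minimizes the sum $load_{I}(i,h)+load_{O}(j,h)$ --- so there is no mechanism preventing the greedy from stacking port-$i$ load on $h^{*}$ while keeping port-$j$ load there small; the per-summand bound you need is neither established nor evidently true, and the ``at most one large flow per (port, core) pair in the optimum'' observation does not by itself control what the coupled greedy does. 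You correctly identify this coupling as ``the hard part,'' but identifying it is not the same as closing it.

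The paper sidesteps exactly this difficulty by never splitting the sum. It treats all flows incident to port $i$ or port $j$ as a single set of jobs to be list-scheduled on $m$ identical machines under worst-case serialization (no two such flows run in parallel on one core); on that auxiliary instance FLPT's min-sum rule \emph{is} LPT list scheduling on the machine loads $load_{I}(i,h)+load_{O}(j,h)$, so Graham's bound applies verbatim to give $T \le \left(\tfrac{4}{3}-\tfrac{1}{3m}\right)T^{*}_{max}$, where $T^{*}_{max}$ is the optimum of the serialized instance. It then bounds $T^{*}_{max} \le T^{*}_{i}+T^{*}_{j} \le 2T^{*}$ by splicing together optimal schedules for the two ports. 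No case split on $d_f$ is needed. If you want to salvage your outline, the fix is to redirect your Graham argument at the aggregated quantity $load_{I}(i,h^{*})+load_{O}(j,h^{*})$ (with optimum $T^{*}_{max}$) rather than at the two summands, and then supply the $T^{*}_{max}\le 2T^{*}$ comparison; as written, the argument has a genuine gap at its central step.
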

\begin{proof}
Assume that the last flow in the schedule of FLPT is the flow $f$. Considering the flows $\{1, 2, \ldots, f\} \subseteq \mathcal{F}$, they are sorted in non-increasing order of the size of flow, i.e., $d_1 \geq d_2 \geq \cdots \geq d_f$. Assume that all flows $\{1, 2, \ldots, f, f+1, \ldots, n\} = \mathcal{F}$, they are sorted in non-increasing order of the size of flow, too, i.e., $d_1 \geq d_2 \geq \cdots \geq d_f \geq d_{f+1} \geq \cdots \geq d_n$. Since flows $\{f+1,\ldots,n\} \subseteq \mathcal{F}$ do not change the value of $T$, we can omit them. Therefore, flow $f$ is viewed as the latest and the smallest flow.

Based on the discussion above, our notations can be defined. Let $\mathcal{S}\subseteq \mathcal{F}$ be the set of flows $\{1, 2, \ldots, f\}$, where $f$ is the latest and the smallest flow in the schedule of FLPT. Considering the worst case, let $T^*_{max}$ be the optimal time of the scheduling solution for all flows on input port $i \in \mathcal{I}$ and output port $j \in \mathcal{J}$, where no two flows can be transmitted simultaneously on the same core.
According Graham’s bound~\cite{Graham69}, we have 
\begin{eqnarray*}
T & \leq & \left(\frac{4}{3} - \frac{1}{3m}\right) T^*_{max}. 
\end{eqnarray*}

Let $T^*_{i}$ be the cost of an optimal solution only for port $i \in \mathcal{I}$, and $T^*_{j}$ be the optimal solution only for port $j \in \mathcal{J}$. Note that $T^*_{max} \leq T^*_{i} + T^*_{j}$ must be held, otherwise, we can construct a solution by using $T^*_{i}$ and $T^*_{j}$, which is better than the optimal solution $T^*_{max}$. Since $T^* \geq \max(T^*_{i}, T^*_{j})$, $T^*_{max} \leq T^*_{i} + T^*_{j} \leq 2T^*$. Finally, we have
\begin{eqnarray*}
T & \leq & \left(\frac{4}{3} - \frac{1}{3m}\right) T^*_{max}  \\
  & \leq & 2\left(\frac{4}{3} - \frac{1}{3m}\right) T^*  \\
  & =    & \left(\frac{8}{3} - \frac{2}{3m}\right) T^*. 
\end{eqnarray*}
\end{proof}

Therefore, theorem \ref{thm_FLPT} is derived from lemma \ref{lem_FLPT}.
\begin{theorem}\label{thm_FLPT}
FLPT (Algorithm \ref{FLPT}) has an approximation ratio of $\frac{8}{3} - \frac{2}{3m}$, where $m$ is the number of network cores.
\end{theorem}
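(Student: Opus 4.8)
The plan is to obtain Theorem \ref{thm_FLPT} as an immediate corollary of Lemma \ref{lem_FLPT}, in exactly the same way that Theorem \ref{thm_FLS} follows from Lemma \ref{lem_FLS}. The only conceptual ingredient is the definition of approximation ratio: an algorithm is an $\alpha$-approximation if, on every instance, the cost of the solution it returns is at most $\alpha$ times the cost of an optimal solution. For the makespan objective the returned cost is precisely the makespan $T$ produced by FLPT, and the optimal cost is $T^*$, so establishing the ratio amounts to exhibiting a uniform bound $T \leq \alpha T^*$ with $\alpha = \frac{8}{3} - \frac{2}{3m}$.

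The key step is then simply to quote the inequality
\[
T \leq \left(\frac{8}{3} - \frac{2}{3m}\right) T^*,
\]
which Lemma \ref{lem_FLPT} establishes for an arbitrary instance, with $T^*$ denoting the optimal makespan. Since the instance was arbitrary, this says exactly that the makespan returned by FLPT never exceeds $\left(\frac{8}{3} - \frac{2}{3m}\right)$ times the optimum, which is the definition of a $\left(\frac{8}{3} - \frac{2}{3m}\right)$-approximation algorithm. No additional case analysis or optimization is needed at the level of the theorem.

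I do not expect a genuine obstacle at this stage, because all of the substantive work lives inside the proof of Lemma \ref{lem_FLPT}: reducing the analysis to the prefix of flows ending at the last (and hence smallest) scheduled flow $f$, invoking Graham's bound to obtain $T \leq \left(\frac{4}{3} - \frac{1}{3m}\right) T^*_{max}$ for the worst-case single-core model, and bounding $T^*_{max} \leq T^*_{i} + T^*_{j} \leq 2T^*$ by arguing that an optimal joint schedule cannot beat separately optimal schedules for the input and output ports. Had that lemma not already been in hand, the hard part would be precisely these two ingredients—justifying the reduction to the per-port ``no two flows in parallel'' instance so that Graham's classical list-scheduling bound applies, and verifying the subadditivity estimate $T^*_{max} \leq T^*_{i} + T^*_{j}$. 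With the lemma available, however, the theorem is a one-line consequence.
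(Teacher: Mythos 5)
Your proposal is correct and takes exactly the same route as the paper: the paper likewise obtains Theorem \ref{thm_FLPT} as an immediate consequence of Lemma \ref{lem_FLPT}, with all of the substantive work (restricting attention to the prefix ending at the last scheduled flow, invoking Graham's bound, and the estimate $T^*_{max} \leq T^*_{i} + T^*_{j} \leq 2T^*$) living inside the lemma's proof. No further argument is needed at the theorem level.
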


\section{Approximation Algorithm for Coflow-level Scheduling}\label{section:approximation-algorithm-for-indivisible-coflow-scheduling}
This section considers the coflow-level scheduling problem, where distinct flows in a coflow are allowed to be transferred through the same core only. Let $L_{i,k} = \sum_{j=1}^{N} d_{i,j,k}$ be the total amount of data that coflow $k$ has to transfer through input port $i$, and $L_{j,k} = \sum_{i=1}^{N} d_{i,j,k}$ be the total amount of data that coflow $k$ has to transfer through output port $j$.

\subsection{Algorithm}
This subsection introduces an algorithm for solving the coflow-level scheduling problem. The algorithm, called coflow-list-scheduling (CLS) and described in Algorithm \ref{CLS}, aims to assign each coflow $k \in \mathcal{K}$ to a core $h \in \mathcal{M}$ in order to minimize the completion time of coflow $k$. Lines \ref{CLSfor1_s}-\ref{CLSfor1_f} of the algorithm identify the core with the minimum maximum completion time and assign the coflow to it.

In terms of time complexity, Algorithm \ref{CLS} scans each coflow in $\mathcal{K}$ (line \ref{CLSfor1_s}), resulting in $|\mathcal{K}|$ iterations. For each coflow, the algorithm compares $m$ cores to find the least loaded core (line \ref{CLS_min}). Additionally, for each core, $N^2$ pairs of input and output ports are compared to determine the maximum completion time (line \ref{CLS_min}). Consequently, the time complexity of CLS is $O(m{N^2}|\mathcal{K}|)$.

\begin{algorithm}[h]
 \caption{coflow-list-scheduling} \label{CLS}
 \begin{algorithmic}[1]
   \Require a set $\mathcal{K}$, which contains of all coflows 
   \State let $load_{I}{(i,h)}$ be the load on the $i$-th input port of the core $h$ \label{CLS_init}
   \State let $load_{O}{(j,h)}$ be the load on the $j$-th output port of the core $h$
   \State let $\mathcal{A}_h$ be the set of coflows allocated to the core $h$
   \State initialize both $load_{I}$ and $load_{O}$ to 0 and $\mathcal{A}_h = \emptyset$ for all $h \in \mathcal{M}$
   \For{each coflow $k \in \mathcal{K}$} \label{CLSfor1_s} 
      \State $h^* = \arg\min_{h \in \mathcal{M}} \max_{\forall i \in \mathcal{I}, \forall j \in \mathcal{J}} \left\{load_{I}{(i,h)}+\right.$  $\left.load_{O}{(j,h)}+L_{i,k}+L_{j,k}\right\}$ \label{CLS_min} 
      \State $\mathcal{A}_{h^*} = \mathcal{A}_{h^*} \cup \{k\}$
      \State $load_{I}{(i,h^*)}=load_{I}{(i,h^*)} + L_{i,k}$, $\forall i \in \mathcal{I}$
			\State $load_{O}{(j,h^*)}=load_{O}{(j,h^*)} + L_{j,k}$, $\forall j \in \mathcal{J}$
   \EndFor \label{CLSfor1_f}
   \State \textbf{return} $\left\{\mathcal{A}_h\right\}$ for $h \in \mathcal{M}$
 \end{algorithmic}
\end{algorithm}

\subsection{Analysis}
This section paper shows that Algorithm \ref{CLS} achieves $(2m)$-approximation ratio, where $m$ is the number of network cores. First, the following lemma for Algorithm \ref{CLS} is obtained:

\begin{lemma}\label{lem_CLS}
Let $T^*$ be the cost of an optimal solution, and let $T$ denote the makespan in the schedule found by CLS (Algorithm \ref{CLS}). Then, 
\begin{eqnarray*}
T \leq 2mT^*. 
\end{eqnarray*}
\end{lemma}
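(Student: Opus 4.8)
The plan is to trace the makespan back to the single coflow that finishes last and to exploit the coflow-level restriction forcing all of its flows onto one core. First I would let $k^*$ be the coflow that completes last in the CLS schedule, so that $T = C_{k^*}$, and let $h^*$ be the core to which CLS assigns $k^*$. Adopting the same worst-case congestion model used for FLPT---where no two flows on a core run in parallel, so the loads seen at input $i$ and output $j$ of a core simply add---the completion time of $k^*$ is bounded by the quantity the algorithm evaluates in line \ref{CLS_min}, namely
\begin{eqnarray*}
T = C_{k^*} \leq \max_{i \in \mathcal{I}, j \in \mathcal{J}} \left\{load_{I}(i,h^*) + load_{O}(j,h^*) + L_{i,k^*} + L_{j,k^*}\right\},
\end{eqnarray*}
where $load_{I}$ and $load_{O}$ denote the port loads on $h^*$ just before $k^*$ is placed. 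Let $i^*$ and $j^*$ be the input and output ports attaining this maximum.

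The key step is a regrouping: I would split the bound as $\bigl(load_{I}(i^*,h^*) + L_{i^*,k^*}\bigr) + \bigl(load_{O}(j^*,h^*) + L_{j^*,k^*}\bigr)$, folding $k^*$'s own demand into the corresponding port load. The first parenthesis is exactly the total input demand routed through port $i^*$ on core $h^*$ once $k^*$ has been added, i.e. $\sum_{k' \in \mathcal{A}_{h^*}} L_{i^*,k'}$ taken over the coflows assigned to $h^*$. Since these coflows form a subset of $\mathcal{K}$, this sum is at most the total input demand through port $i^*$ over the entire instance, $\sum_{k \in \mathcal{K}} L_{i^*,k}$. The lower bound \eqref{lower} gives $\sum_{k} L_{i^*,k} \leq \max_i \sum_k L_{i,k} \leq m T^*$, so the first parenthesis is at most $m T^*$; the symmetric argument bounds the second by $m T^*$. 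Summing the two yields $T \leq 2mT^*$.

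The main obstacle I would flag is a subtle accounting point. The naive route---bounding $load_{I}(i^*,h^*) \leq m T^*$ and $L_{i^*,k^*} \leq T^*$ (and likewise for the output terms) separately---loses an additive $2T^*$ and only delivers $(2m+2)T^*$. The whole point is that the self-terms $L_{i^*,k^*}$ and $L_{j^*,k^*}$ must be \emph{absorbed} into the port-load sums rather than estimated by the single-coflow inequality, since doing so keeps the number of coflows contributing to each port exactly at the instance total and preserves the tight factor $m$. I would also remark that the greedy least-load selection of line \ref{CLS_min} is not actually needed for this estimate---only the facts that all of $k^*$'s flows reside on one core and that the coflows on $h^*$ are a subset of $\mathcal{K}$ are used---so the argument is driven entirely by the coflow-level constraint together with the port-load lower bound.
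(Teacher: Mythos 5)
Your proposal is correct and follows essentially the same route as the paper: the paper's proof takes the link $(i,j)$ carrying the last flow, bounds $T \leq \sum_{k \in \mathcal{K}} (L_{i,k} + L_{j,k})$, and applies the port-load lower bound \eqref{lower} to each sum to get $mT^* + mT^* = 2mT^*$. Your ``regrouping'' of the algorithm's line-\ref{CLS_min} expression into the two port totals is just a more explicit derivation of that same inequality, and your closing observation---that only the coflow-level constraint and the subset relation $\mathcal{A}_{h^*} \subseteq \mathcal{K}$ are used, not the greedy choice---is consistent with how the paper argues.
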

\begin{proof}
According to the lower bound (\ref{lower}), we know that 
\begin{align}
\frac{1}{m} \sum_{k \in \mathcal{K}} L_{i,k} &\leq T^*, & &\forall i \in \mathcal{I} \label{CLS_p1} \\
\frac{1}{m} \sum_{k \in \mathcal{K}} L_{j,k} &\leq T^*, & &\forall j \in \mathcal{J}. \label{CLS_p2}
\end{align}
Assume that the last flow in the schedule of CLS is sent via link $(i,j)$. We have 
\begin{align}
T &\leq \sum_{k \in \mathcal{K}} (L_{i,k} + L_{j,k}) \label{CLS_p3} \\
              &\leq mT^* + mT^* \label{CLS_p4} \\
              &= 2mT^*. \nonumber
\end{align}
The inequality \eqref{CLS_p3} is held since $T$ is bounded by the size of all flows via link $(i,j)$. The inequality \eqref{CLS_p4} is due to inequalities \eqref{CLS_p1} and \eqref{CLS_p2}, where $\sum \limits_{k \in \mathcal{K}} L_{i,k} \leq mT^*$ and $\sum \limits_{k \in \mathcal{K}} L_{j,k} \leq mT^*$.
\end{proof}

Therefore, theorem \ref{thm_CLS} is derived from lemma \ref{lem_CLS}.
\begin{theorem}
\label{thm_CLS}CLS (Algorithm \ref{CLS}) has an approximation ratio of $(2m)$, where $m$ is the number of network cores.
\end{theorem}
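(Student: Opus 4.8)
The plan is to derive Theorem \ref{thm_CLS} directly from Lemma \ref{lem_CLS}, which already supplies the quantitative bound $T \leq 2m T^*$ relating the makespan $T$ produced by CLS to the optimal makespan $T^*$. Since an approximation ratio of $\alpha$ means precisely that the algorithm's objective value never exceeds $\alpha$ times the optimum on any instance, and Lemma \ref{lem_CLS} establishes exactly this with $\alpha = 2m$ for an arbitrary input, the theorem follows with essentially no additional argument. First I would recall the definition of the approximation ratio, then invoke the lemma on an arbitrary instance, and conclude that CLS is a $(2m)$-approximation.

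If one wanted to reconstruct the estimate without appealing to the lemma, the key steps would be: first, recall the lower bound (\ref{lower}), which yields $\frac{1}{m}\sum_{k} L_{i,k} \leq T^*$ for every input port $i$ and $\frac{1}{m}\sum_{k} L_{j,k} \leq T^*$ for every output port $j$; second, observe that, for the link $(i,j)$ carrying the last flow in the CLS schedule, the makespan is bounded by the total data routed across that input and output port summed over all coflows, i.e. $T \leq \sum_{k}(L_{i,k}+L_{j,k})$; and third, combine these two facts to obtain $T \leq mT^* + mT^* = 2mT^*$.

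The main conceptual obstacle is hidden entirely inside the lemma, namely the second step above: justifying that the makespan can be charged against the aggregate load on a single input--output pair. This relies on the coflow-level constraint that all flows of a coflow traverse one core together, so in the worst case the loads on port $i$ and port $j$ of a core simply accumulate over every coflow assigned there, giving the crude bound $T \leq \sum_{k}(L_{i,k}+L_{j,k})$ that ultimately costs the factor $m$. Once that bound is in hand, the lower bound (\ref{lower}) finishes the estimate immediately. Since all of this is already carried out in the proof of Lemma \ref{lem_CLS}, the proof of the theorem reduces to a one-line citation of that lemma.
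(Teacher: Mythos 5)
Your proposal is correct and follows the paper exactly: the theorem is obtained as an immediate consequence of Lemma \ref{lem_CLS}, whose proof uses precisely the two ingredients you identify, namely the lower bound (\ref{lower}) giving $\frac{1}{m}\sum_{k} L_{i,k} \leq T^*$ and $\frac{1}{m}\sum_{k} L_{j,k} \leq T^*$, together with the charge $T \leq \sum_{k}(L_{i,k}+L_{j,k})$ on the input--output pair of the last flow. Nothing further is needed.
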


\section{Experiments}\label{section:experiments}
This section presents the simulation results and evaluates the performance of our proposed algorithms. Additionally, this paper compares the performance of our proposed algorithms for the flow-level scheduling problem with the algorithm Weaver proposed by Huang \textit{et al.} \cite{Weaver}. Furthermore, the experiment demonstrates that our results align with the approximation ratios analyzed in sections \ref{section:approximation-algorithm-for-divisible-coflow-scheduling} and \ref{section:approximation-algorithm-for-indivisible-coflow-scheduling}.

\subsection{Workload}
We have implemented a flow-level simulator to track the assignment of each flow to various cores in both identical parallel networks and heterogeneous parallel networks. Our simulator is based on Mosharaf's implementation \cite{website:mosharaf}, which originally simulates coflows on a single core. To track coflows assigned to $m$ cores, we have modified the code so that our simulator traces flows $m$ times for all cores. Additionally, we have incorporated Shafiee and Ghaderi's algorithm \cite{Shafiee} to ensure that all flows are transferred in a preemptible manner within each core. Moreover, each link in our simulator has a capacity of 128 MBps. We have chosen the time unit to be $\tfrac{1}{128}$ second (approximately 8 milliseconds) so that each link has a capacity of 1 MB per time unit.

In our study, all algorithms are simulated using both synthetic and real traffic traces. In synthetic traces, coflows are generated based on the number of coflows, denoted as $K$, and the number of ports, denoted as $N$. Each coflow is described by $(W_{min}, W_{max}, L_{min}, L_{max})$, where $1 \leq W_{min} \leq W_{max}$ and $1 \leq L_{min} \leq L_{max}$, for all $W_{min}, W_{max}, L_{min}, L_{max} \in \mathbb{Z}$. Let $M$ represent the number of non-zero flows within each coflow. Then, $M = w_1 \cdot w_2$, where $w_1$ and $w_2$ are randomly chosen from the set $\{W_{min}, W_{min}+1, \ldots, W_{max}\}$. Additionally, $w_1$ is randomly assigned to input links, and $w_2$ is randomly assigned to output links. The size of flow $d_{i,j,k}$ is randomly selected from $\{L_{min}, L_{min}+1, \ldots, L_{max}\}$. If the construction of a coflow is not explicitly specified in the synthetic traces, the default construction for all coflows follows a certain distribution of coflow descriptions: $(1, 5, 1, 10)$, $(1, 5, 10, 1000)$, $(5, N, 1, 10)$, and $(5, N, 10, 1000)$ with percentages of $41\%$, $29\%$, $9\%$, and $21\%$ respectively.

In real traces, coflows are generated from a realistic workload based on a Hive/MapReduce trace \cite{website:benchmark} obtained from Facebook, which was collected from a 3000-machine setup with 150 racks. These real traces have been widely used as benchmarks in various works, such as \cite{Varys, Weaver, Qiu, Shafiee}. The purpose of this benchmark is to provide realistic workloads synthesized from real-world data-intensive applications for the evaluation of coflow-based solutions. Since this paper does not consider release times, the release time for all coflows is set to 0.

In order to assess the performance of the algorithms, we calculate the approximation ratio. For identical parallel networks, the ratio is obtained by dividing the makespan achieved by the algorithms by the lower bound of the optimal value:
\begin{eqnarray*}
opt = \frac{\max \left\{\max_{i} \sum_{k} L_{i, k}, \max_{j} \sum_{k} L_{j, k}\right\}}{m}
\end{eqnarray*}
where $m$ represents the number of network cores. This lower bound value provides an estimate of the optimal makespan.

For heterogeneous parallel networks, the ratio of algorithms is calculated by dividing the makespan obtained from the algorithms by another lower bound of the optimal value:
\begin{eqnarray*}
opt = \frac{\max \left\{\max_{i} \sum_{k} L_{i, k}, \max_{j} \sum_{k} L_{j, k}\right\}}{\sum_{l=1}^{m} s_{l}}
\end{eqnarray*}
In this case, $s_{l}$ represents the speed factor of core $l$. In a heterogeneous parallel network with $m$ network cores, $s_{l}$ is randomly selected from the range $\left[1, \frac{m}{h}\right]$, where $h$ is a heterogeneity factor. Higher values of $h$ indicate lower variance in network core speeds. This lower bound estimation takes into account the varying speeds of the cores.

By comparing the achieved makespan with these lower bound estimates, we can evaluate the performance of the algorithms in both identical and heterogeneous parallel networks.

\subsection{Simulation Results in Identical Parallel Networks}

\begin{figure}[!h]
\centering
\subfigure[The performance of algorithms: FLS, FLPT, and Weaver.]{
\begin{minipage}[h]{0.4\textwidth}
\centering
{\includegraphics[width=3.4in]{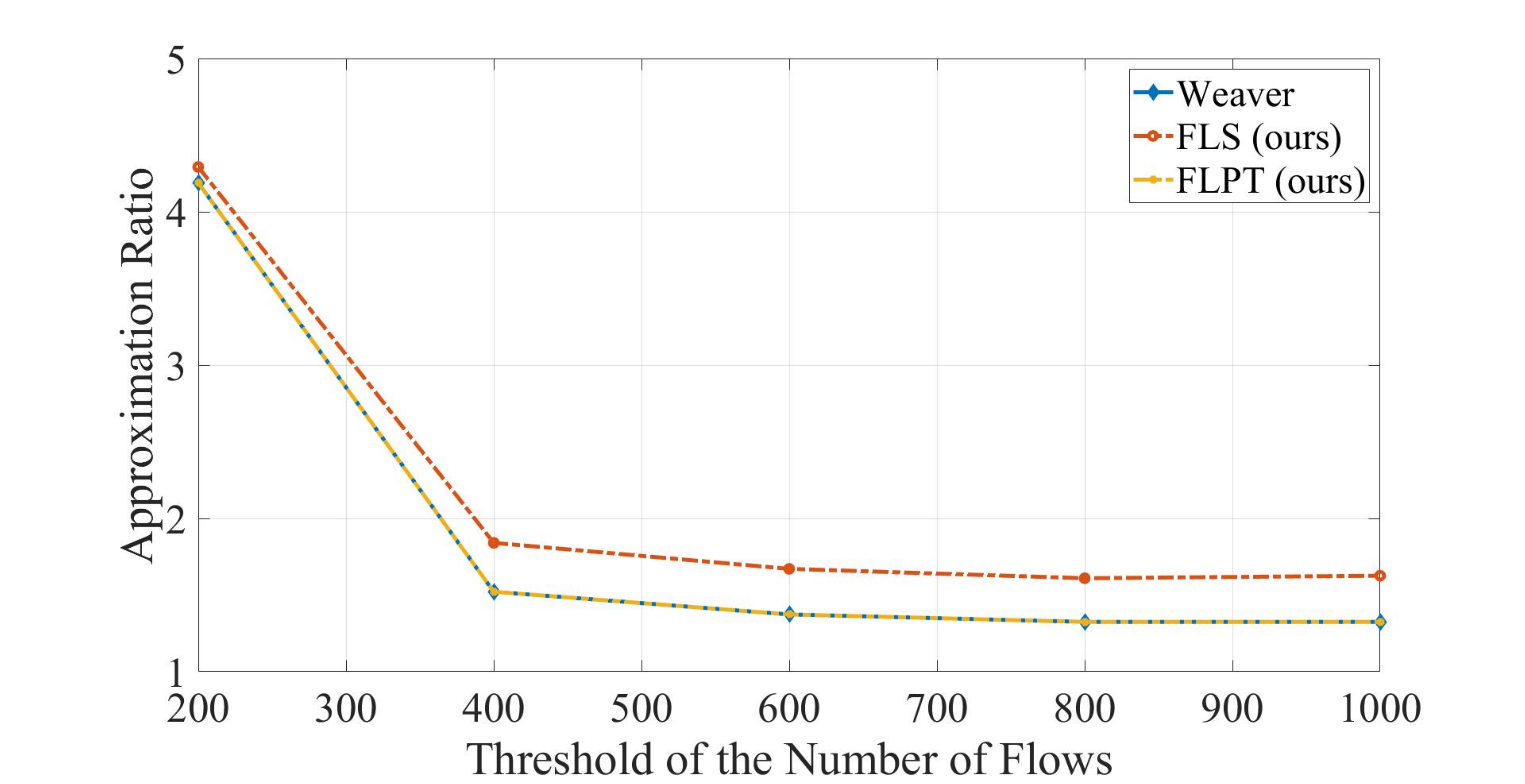}}
\end{minipage}
\label{fig:Divisible coflows from benchmark}
}

\subfigure[The performance of algorithm: CLS.]{
\begin{minipage}[h]{0.4\textwidth}
\centering
{\includegraphics[width=3.4in]{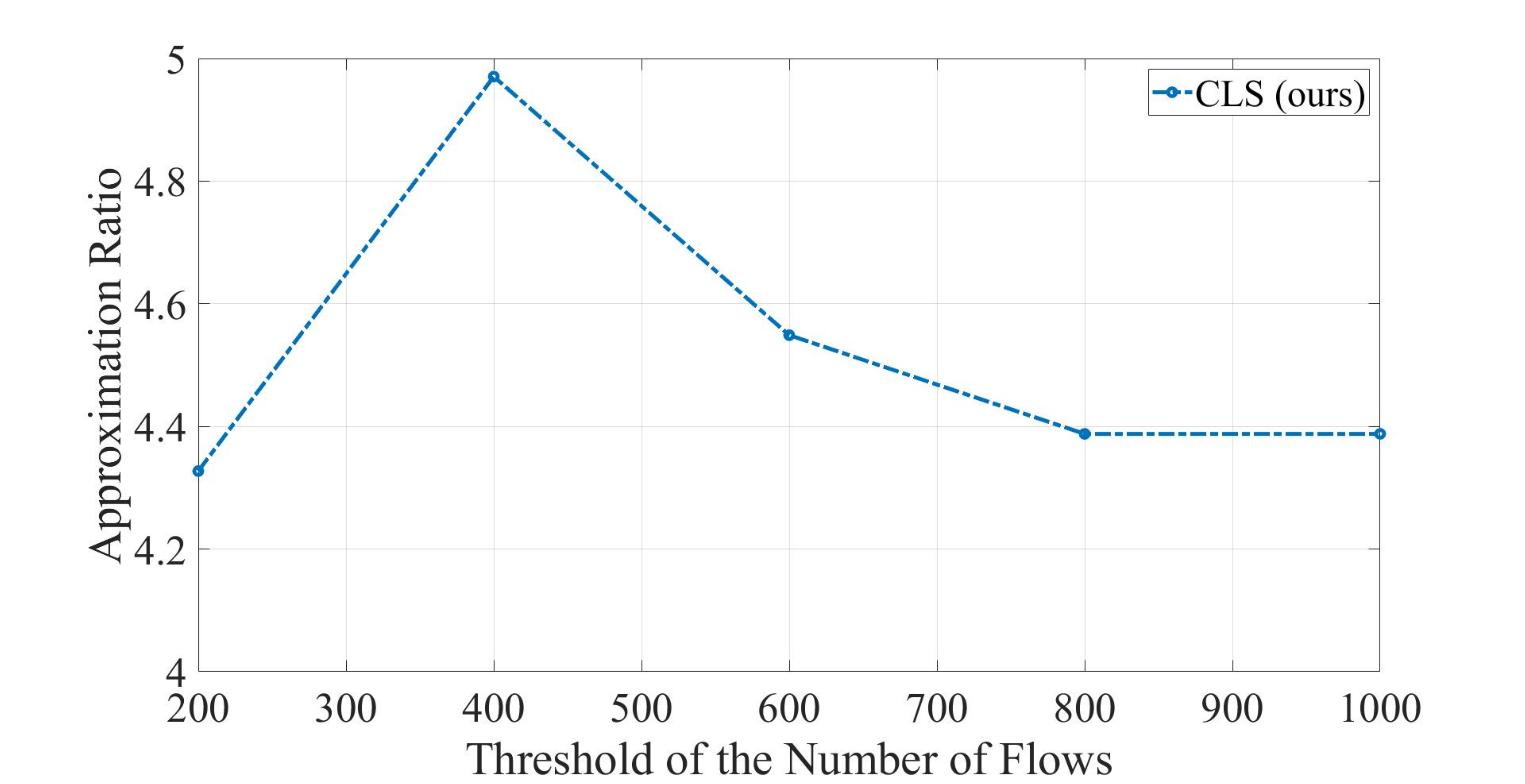}}
\end{minipage}
\label{fig:Indivisible coflows from benchmark}
}
\caption{The approximation ratios of FLS, FLPT, Weaver, and CLS for various thresholds of the number of flows using real traces in identical parallel networks.}
\label{fig:Coflows from benchmark}
\end{figure}

Figure \ref{fig:Coflows from benchmark} illustrates the algorithm ratios of FLS, FLPT, Weaver, and CLS for different thresholds of the number of flows in identical parallel networks. The real traces consist of 526 coflows distributed across $m = 5$ network cores with $N = 150$ input/output links. Among all the coflows, the maximum number of flows is 21170, while the minimum number is 1. Additionally, the maximum flow size is 2472 MB, and the minimum size is 1 MB.
Similar to the approach described in \cite{Shafiee}, we set a threshold to filter coflows based on the number of non-zero flows. Coflows with a number of flows below the threshold are filtered out. We consider five collections filtered using the following thresholds: 200, 400, 600, 800, and 1000.

Subsequent experiments reveal that when there is a large number of coflows or sparse demand matrices, FLPT and Weaver demonstrate very similar performance. Consequently, in Figure \ref{fig:Divisible coflows from benchmark}, FLPT exhibits the same ratio as Weaver. Additionally, CLS matches the ratio of $2m$ as depicted in Figure \ref{fig:Indivisible coflows from benchmark}.

\begin{figure}[!h]
\centering
\subfigure[The performance of algorithms: FLS, FLPT, and Weaver.]{
\begin{minipage}[h]{0.4\textwidth}
\centering
{\includegraphics[width=3.4in]{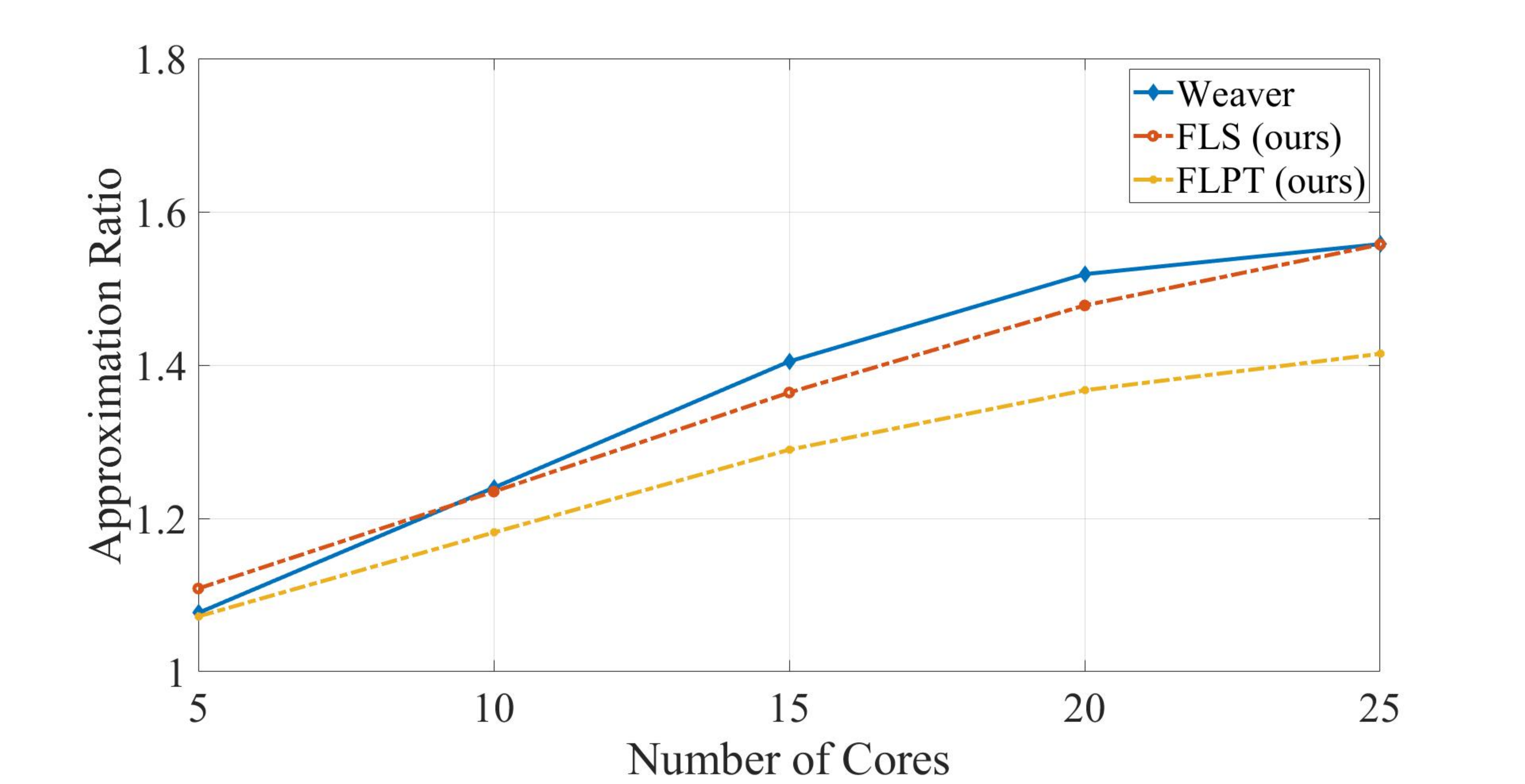}}
\end{minipage}
\label{fig:Divisible coflows from custom num of core}
}

\subfigure[The performance of algorithm: CLS.]{
\begin{minipage}[h]{0.4\textwidth}
\centering
{\includegraphics[width=3.4in]{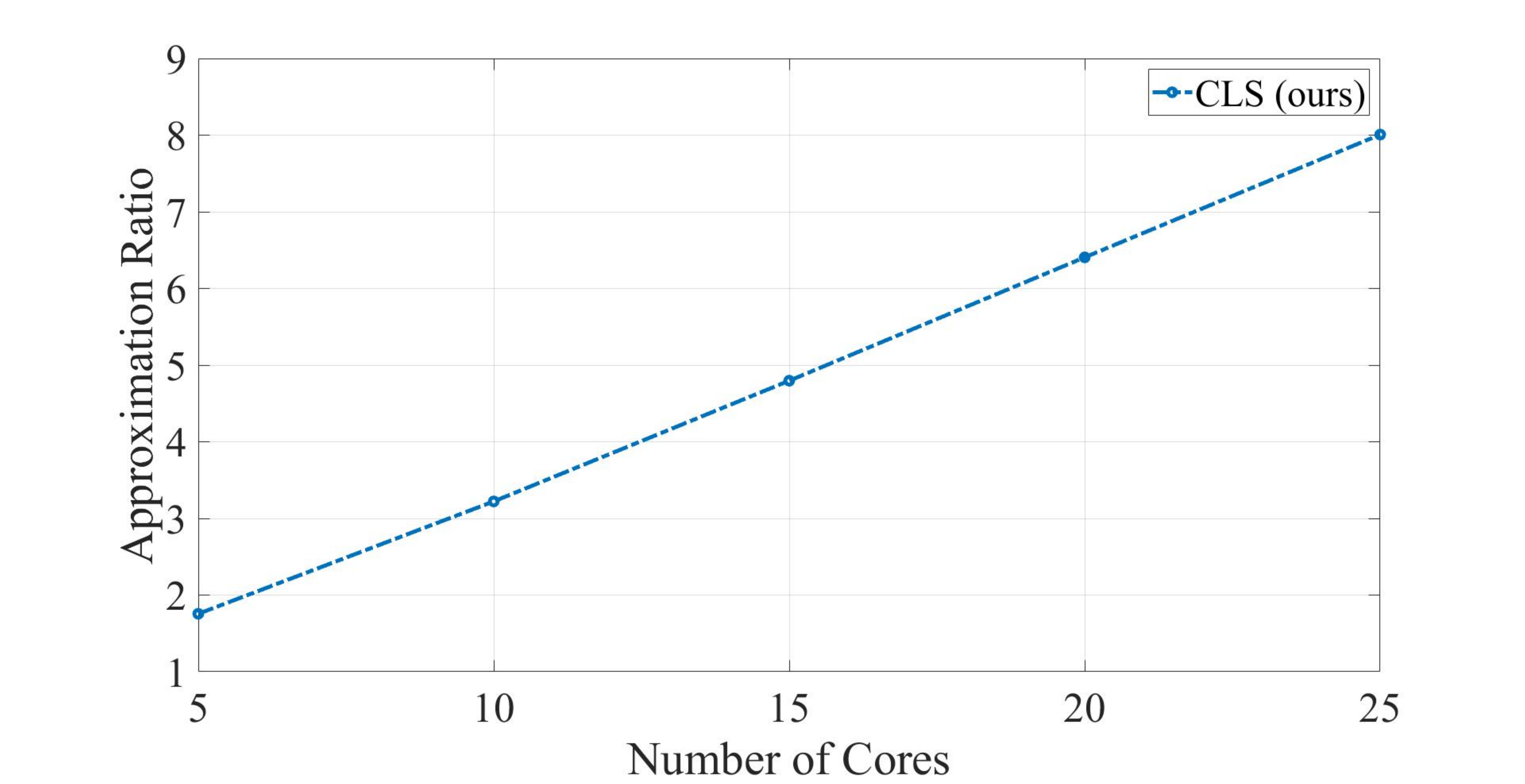}}
\end{minipage}
\label{fig:Indivisible coflows from custom num of core}
}
\caption{Approximation ratio of FLS, FLPT, Weaver, and CLS for various number of cores under synthetic traces in identical parallel networks.}
\label{fig:Coflows from custom num of core}
\end{figure}

Figure \ref{fig:Coflows from custom num of core} illustrates the algorithm ratios of FLS, FLPT, Weaver, and CLS for different numbers of network cores in identical parallel networks. In this synthetic trace, we consider $25$ coflows across 5 scenarios with varying numbers of network cores and $N=10$ input/output links. Each scenario represents a distinct number of cores, namely $m=5, 10, 15, 20, 25$. For each scenario, we generate 100 sample traces and report the average performance of the algorithms.

The results demonstrate that FLPT consistently outperforms Weaver in terms of the ratio, as depicted in Figure \ref{fig:Divisible coflows from custom num of core}. Furthermore, as the number of cores increases, the improvement provided by FLPT becomes more significant.
Additionally, the ratio of CLS exhibits an increasing trend with the growing number of cores, as illustrated in Figure \ref{fig:Indivisible coflows from custom num of core}. This result aligns with the theoretical analysis.

\begin{figure}[!h]
\centering
\subfigure[The performance of algorithms: FLS, FLPT, and Weaver.]{
\begin{minipage}[h]{0.4\textwidth}
\centering
{\includegraphics[width=3.4in]{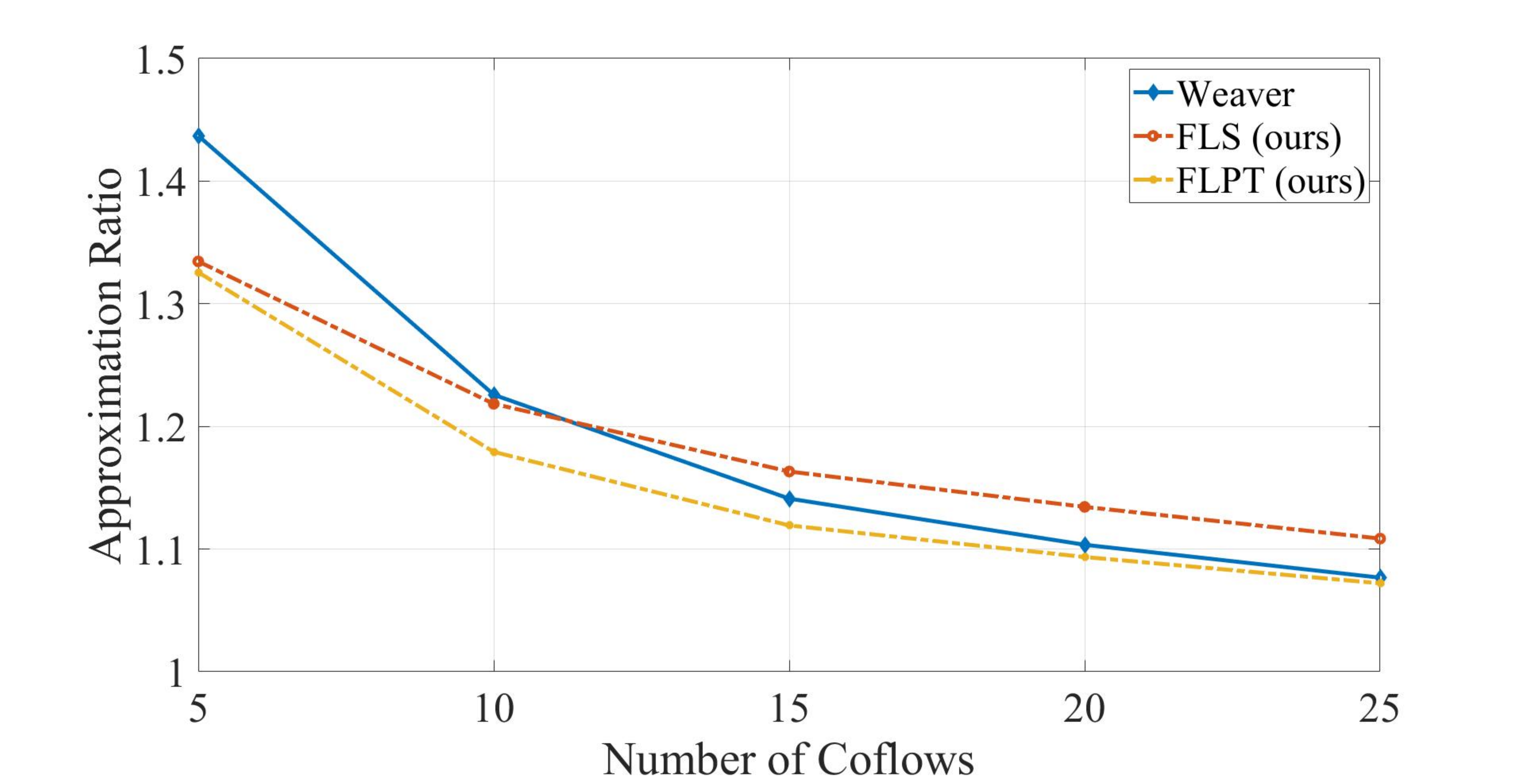}}
\end{minipage}
\label{fig:Divisible coflows from custom num of coflows}
}

\subfigure[The performance of algorithm: CLS.]{
\begin{minipage}[h]{0.4\textwidth}
\centering
{\includegraphics[width=3.4in]{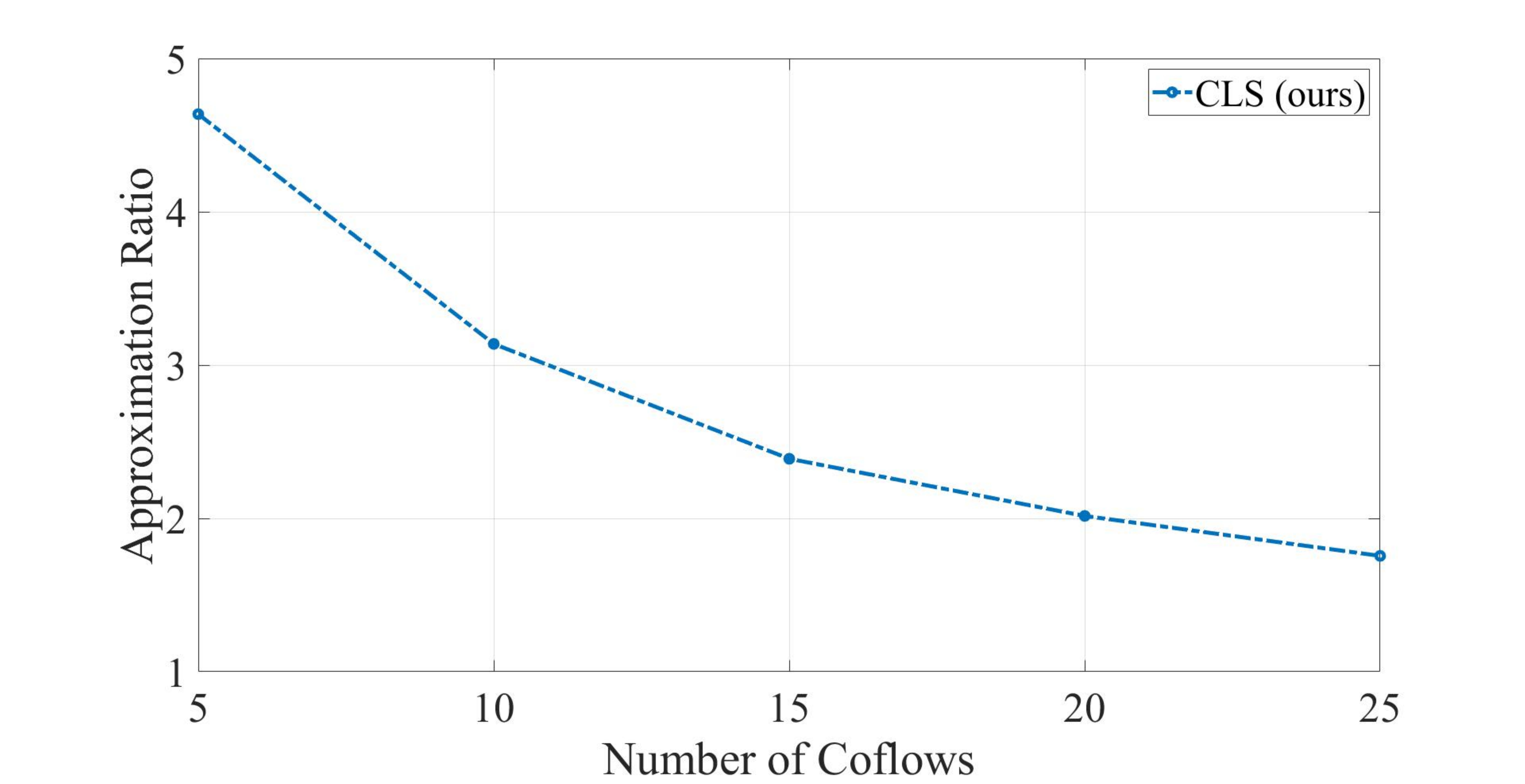}}
\end{minipage}
\label{fig:Indivisible coflows from custom num of coflows}
}
\caption{Approximation ratio of FLS, FLPT, Weaver, and CLS for various number of coflows under synthetic traces in an identical parallel network.}
\label{fig:Coflows from custom num of coflows}
\end{figure}

Figure \ref{fig:Coflows from custom num of coflows} illustrates the algorithm ratios of FLS, FLPT, Weaver, and CLS for different numbers of network coflows in an identical parallel network. In this synthetic trace, we consider $m=5$ cores with $N=10$ input/output links across 5 scenarios with varying numbers of coflows. Each scenario represents a distinct number of coflows: $5, 10, 15, 20, 25$. For each scenario, we generate 100 sample traces and report the average performance of the algorithms.

The results demonstrate that FLPT outperforms Weaver in terms of the approximation ratio, as depicted in Figure \ref{fig:Divisible coflows from custom num of coflows}. Furthermore, as the number of coflows increases, Weaver's performance approaches that of FLPT.
Moreover, the ratio of CLS decreases as the number of coflows increases, as shown in Figure \ref{fig:Indivisible coflows from custom num of coflows}. This indicates that the algorithm performs better with a larger number of coflows.

\begin{figure}[!h]
\centering
\subfigure[The performance of algorithms: FLS, FLPT, and Weaver.]{
\begin{minipage}[h]{0.4\textwidth}
\centering
{\includegraphics[width=3.4in]{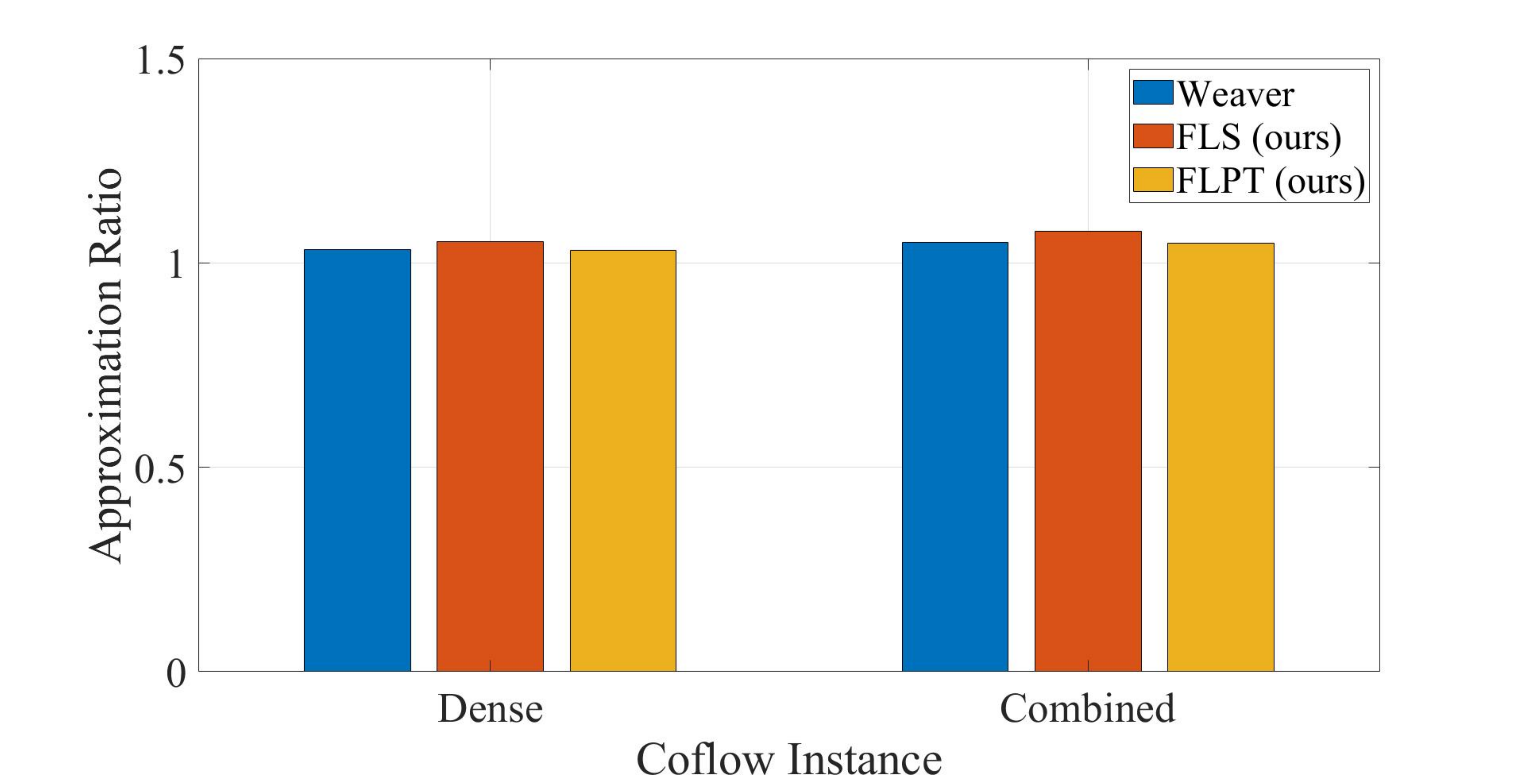}}
\end{minipage}
\label{fig:Divisible coflows from custom dense and combined}
}

\subfigure[The performance of algorithm: CLS.]{
\begin{minipage}[h]{0.4\textwidth}
\centering
{\includegraphics[width=3.4in]{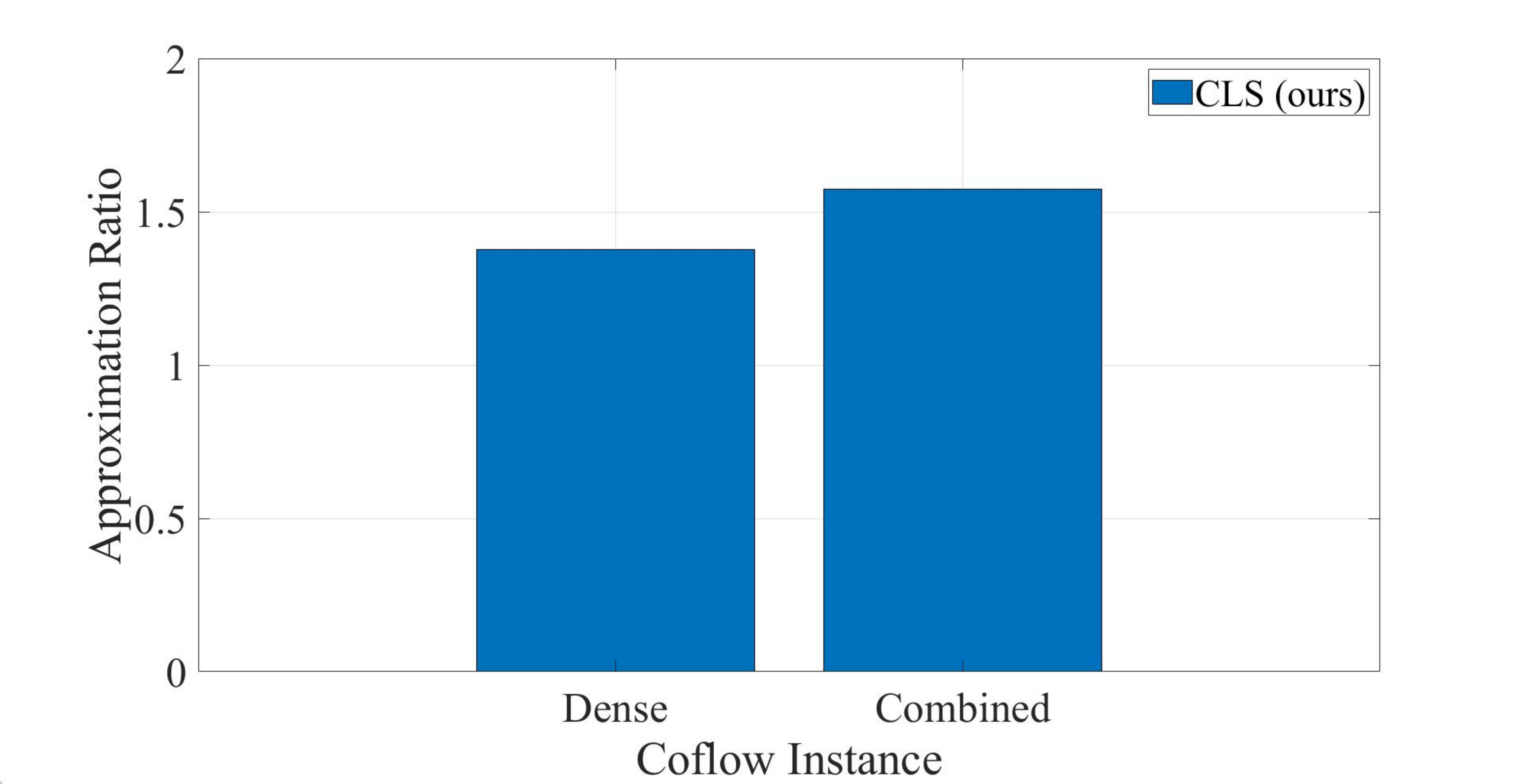}}
\end{minipage}
\label{fig:Indivisible coflows from custom dense and combined}
}
\caption{Approximation ratio of FLS, FLPT, Weaver, and CLS for dense and combined instances under synthetic traces in an identical parallel network.}
\label{fig:Coflows from custom dense and combined}
\end{figure}

Figure \ref{fig:Coflows from custom dense and combined} illustrates the approximation ratio of FLS, FLPT, Weaver, and CLS for dense and combined instances, as described in \cite{Shafiee}. These instances are deployed in an identical parallel network. The synthetic trace consists of two sets of 25 coflows, each utilizing $m=5$ network cores and $N=10$ input/output links. One set represents a dense instance, while the other represents a combined instance.
To create dense and combined instances, we define dense and sparse coflows. In a dense coflow, the coflow description $(W_{min}, W_{max}, L_{min}, L_{max})$ is set to $(\sqrt{N}, N, 1, 100)$. On the other hand, a sparse coflow has the description $(1, \sqrt{N}, 1, 100)$. Therefore, in a dense instance, each coflow is dense, whereas in a combined instance, each coflow has an equal probability of being dense or sparse.

We generate 100 sample traces for each instance and present the average performance of the algorithms. The results show that FLPT outperforms Weaver in both dense and combined instances, as depicted in Figure \ref{fig:Divisible coflows from custom dense and combined}. The improvement is more significant in the case of dense instances. Moreover, the ratio of dense instances is superior to that of combined instances, as shown in Figure \ref{fig:Divisible coflows from custom dense and combined} and Figure \ref{fig:Indivisible coflows from custom dense and combined}.

\begin{figure}[!h]
\centering
\subfigure[The box plot of algorithms: FLS, FLPT, and Weaver.]{
\begin{minipage}[h]{0.4\textwidth}
\centering
{\includegraphics[width=3.4in]{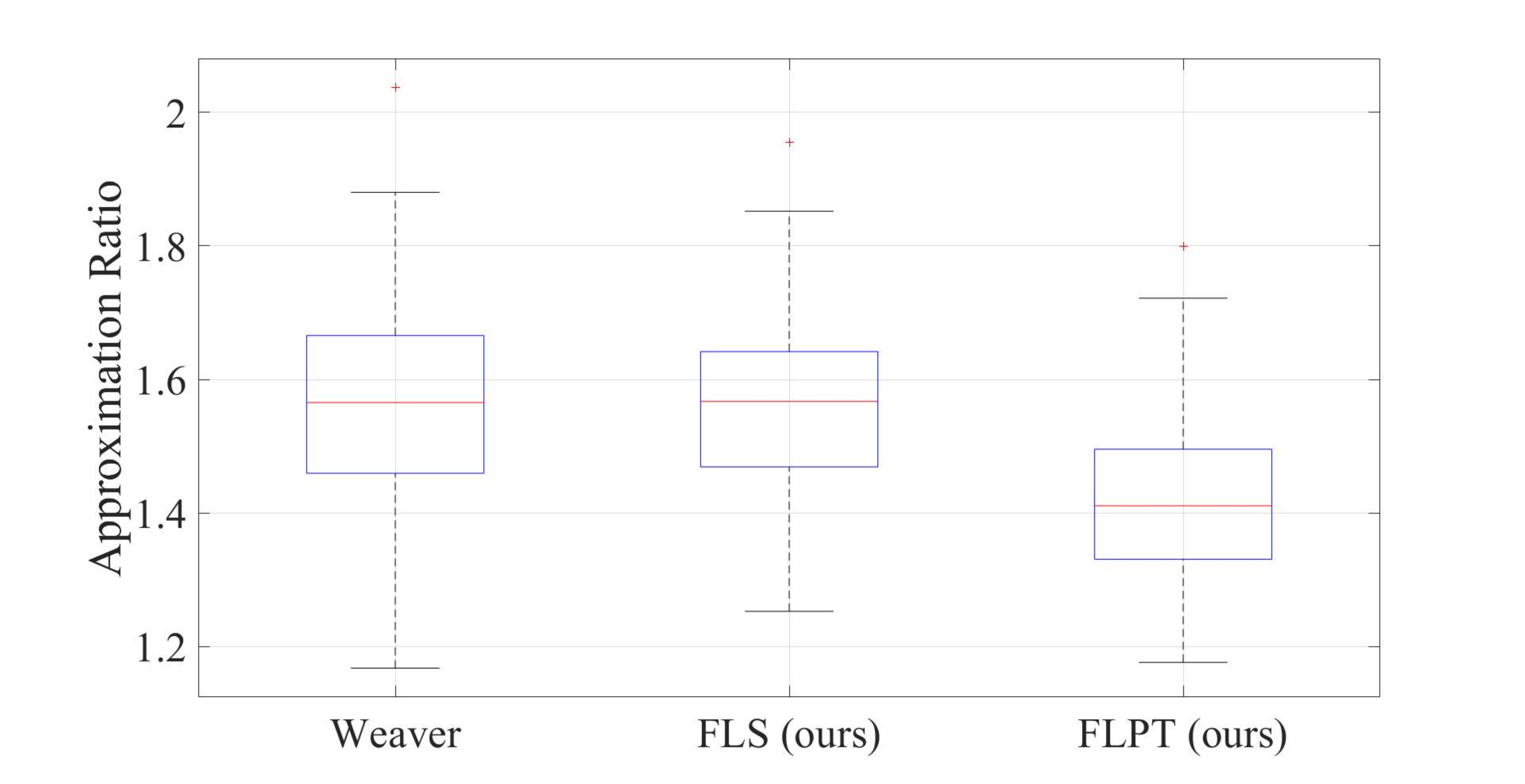}}
\end{minipage}
\label{fig:Divisible coflows from custom box plot}
}

\subfigure[The box plot of algorithm: CLS.]{
\begin{minipage}[h]{0.4\textwidth}
\centering
{\includegraphics[width=3.4in]{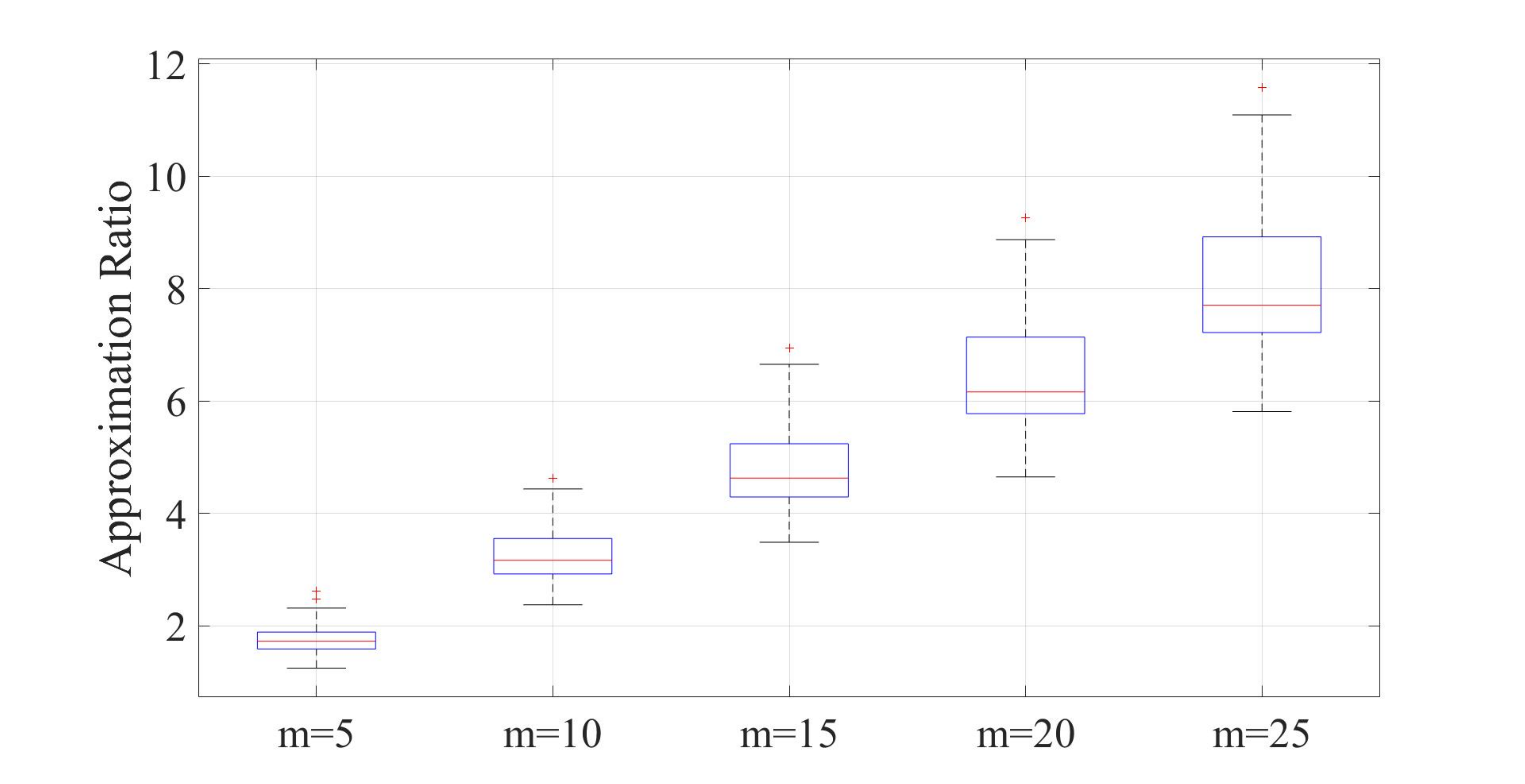}}
\end{minipage}
\label{fig:Indivisible coflows from custom box plot}
}
\caption{The box plot of FLS, FLPT, Weaver, and CLS under synthetic traces in an identical parallel network.}
\label{fig:Coflows from custom box plot}
\end{figure}

\begin{table}[h]
\caption{The quartiles, maximum, and minimum of FLS, FLPT, Weaver, and CLS for box plots in Figure \ref{fig:Coflows from custom box plot}.}
\centering
\renewcommand\arraystretch{1.5}
\begin{tabular}{||c|c|c|c|c||}
\hline
\textbf{\diagbox{Q{\&}M}{Algo}} & \textbf{FLS} & \textbf{FLPT} & \textbf{Weaver} & \textbf{CLS} \\ \hline
\textbf{$Q_1$} & 1.4692 & 1.3310 & 1.4597 & 7.2222 \\ \hline
\textbf{$Q_2$} & 1.5671 & 1.4109 & 1.5655 & 7.7068 \\ \hline
\textbf{$Q_3$} & 1.6418 & 1.4956 & 1.6659 & 8.9244 \\ \hline
\textbf{Maximum} & 1.9551 & 1.7989 & 2.0372 & 11.5818 \\ \hline
\textbf{Minimum} & 1.2529 & 1.1764 & 1.1679 & 5.8152 \\ \hline
\end{tabular}
\label{tableBoxPlot}
\end{table}

Figure \ref{fig:Coflows from custom box plot} presents a box plot showing the performance of FLS, FLPT, Weaver, and CLS in an identical parallel network. The synthetic trace used for this analysis consists of $25$ coflows deployed in a network with $m=25$ cores and $N=10$ input/output links.
We conducted 100 sample traces for each algorithm and represented the results using a box plot. The plot includes quartiles, maximum and minimum values for each algorithm. The findings demonstrate that FLPT not only achieves a superior ratio compared to Weaver but also exhibits a narrower interquartile range. This is evident in Figure \ref{fig:Coflows from custom box plot} and summarized in Table \ref{tableBoxPlot}. In Figure \ref{fig:Indivisible coflows from custom box plot}, it can be observed that the approximation ratio of CLS, as well as its interquartile range, increase as the number of cores grows.

\begin{figure}[!h]
\centering
\subfigure[The CDF of algorithms: FLS, FLPT, and Weaver.]{
\begin{minipage}[h]{0.4\textwidth}
\centering
{\includegraphics[width=3.4in]{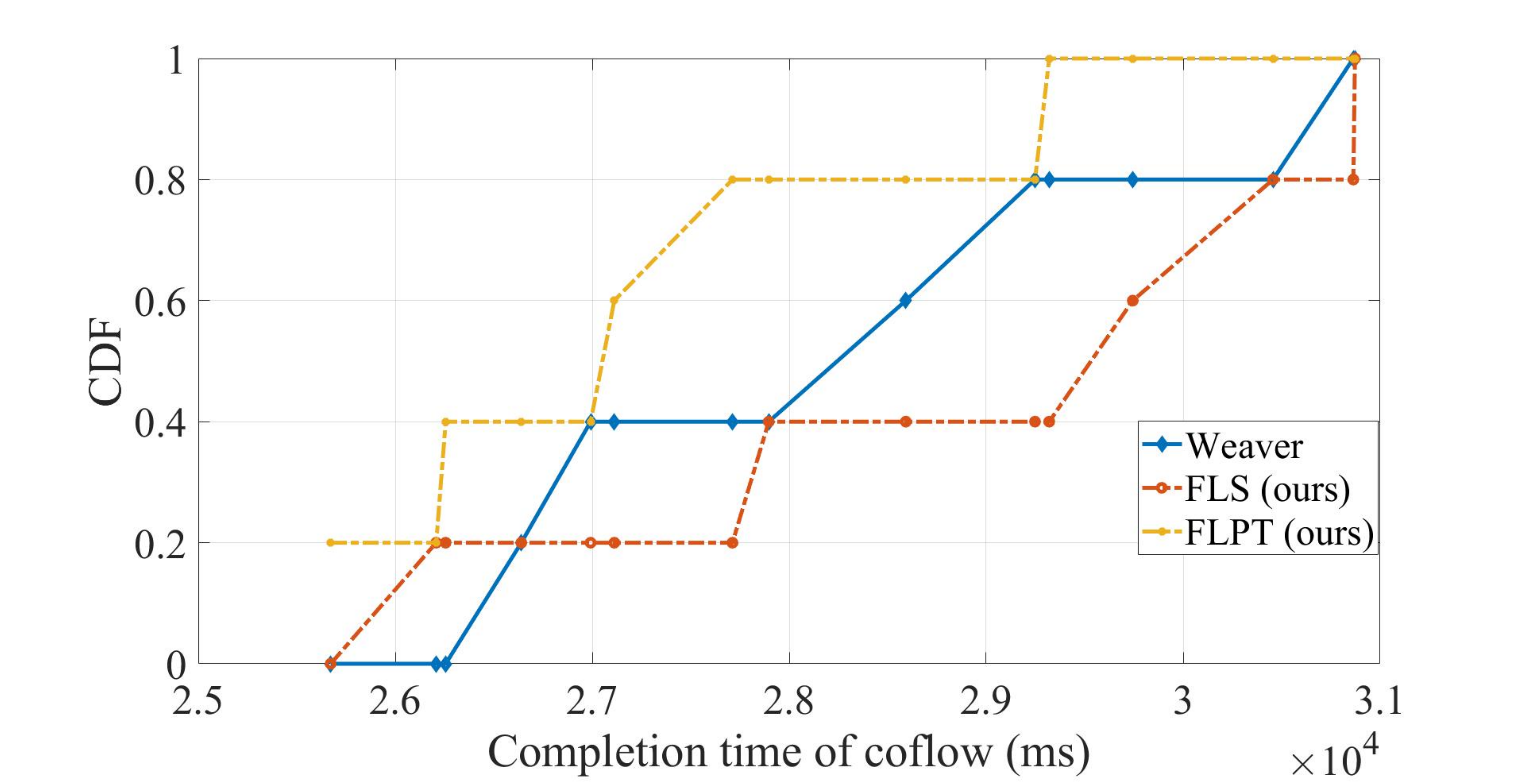}}
\end{minipage}
\label{fig:Divisible coflows from custom CDF}
}

\subfigure[The CDF of algorithm: CLS.]{
\begin{minipage}[h]{0.4\textwidth}
\centering
{\includegraphics[width=3.4in]{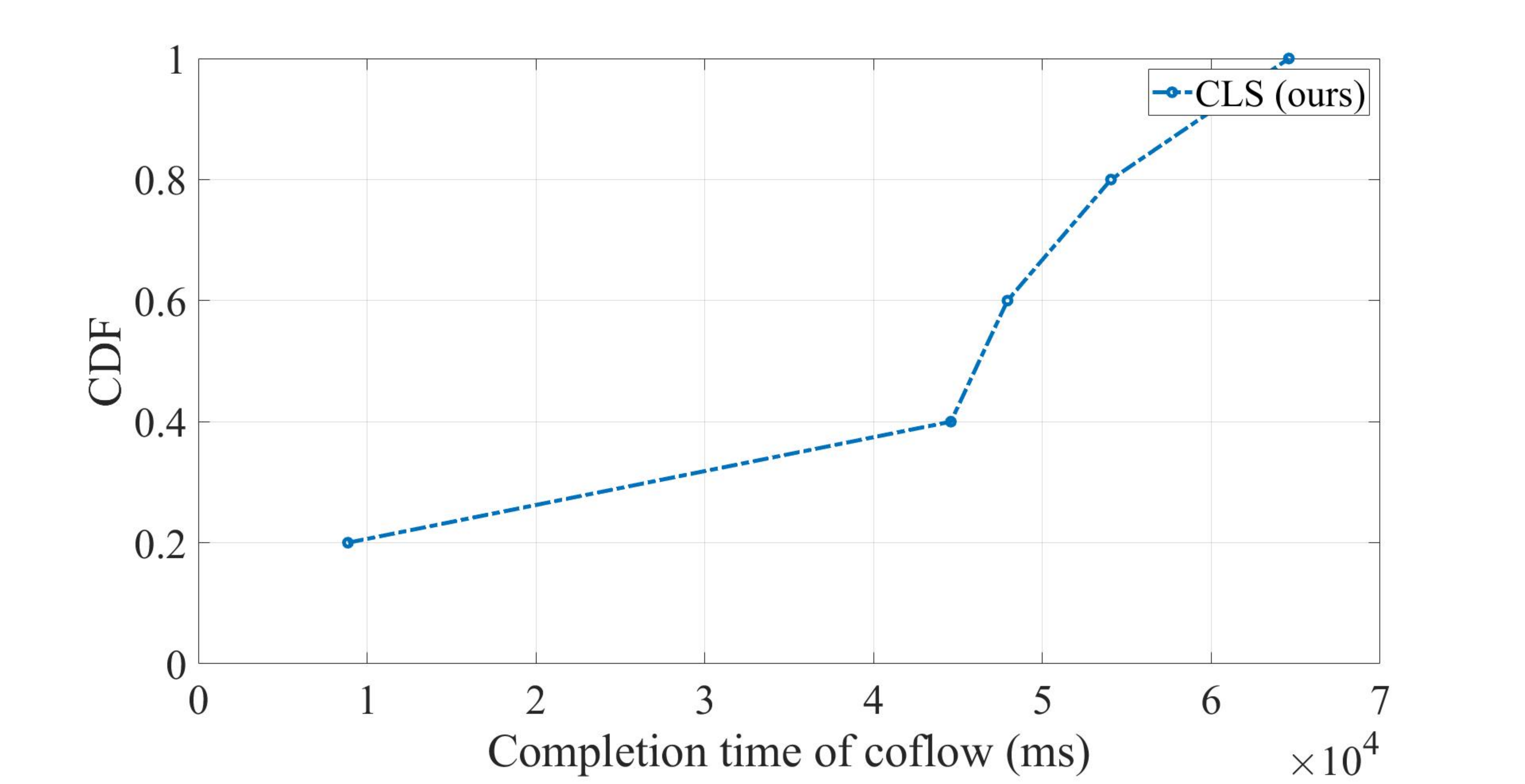}}
\end{minipage}
\label{fig:Indivisible coflows from custom CDF}
}
\caption{CDF of the core completion time for FLS, FLPT, Weaver, and CLS under synthetic traces in an identical parallel network.}
\label{fig:Coflows from custom CDF}
\end{figure}

Figure \ref{fig:Coflows from custom CDF} displays the cumulative distribution function (CDF) of core completion time for FLS, FLPT, Weaver, and CLS in an identical parallel network. The synthetic trace used in this analysis involves 15 coflows operating in a network with $m=5$ cores and $N=10$ input/output links.
The results indicate that FLPT achieves core completion before 29.320 seconds for all cores, while both FLS and Weaver achieve completion before 30.872 seconds, as illustrated in Figure \ref{fig:Divisible coflows from custom CDF}. Additionally, CLS completes the core processing before 64.592 seconds for all cores, as depicted in Figure \ref{fig:Indivisible coflows from custom CDF}.

\begin{figure}[!h]
\centering
\subfigure[The performance of algorithms: FLPT, and Weaver.]{
\begin{minipage}[h]{0.4\textwidth}
\centering
{\includegraphics[width=3.4in]{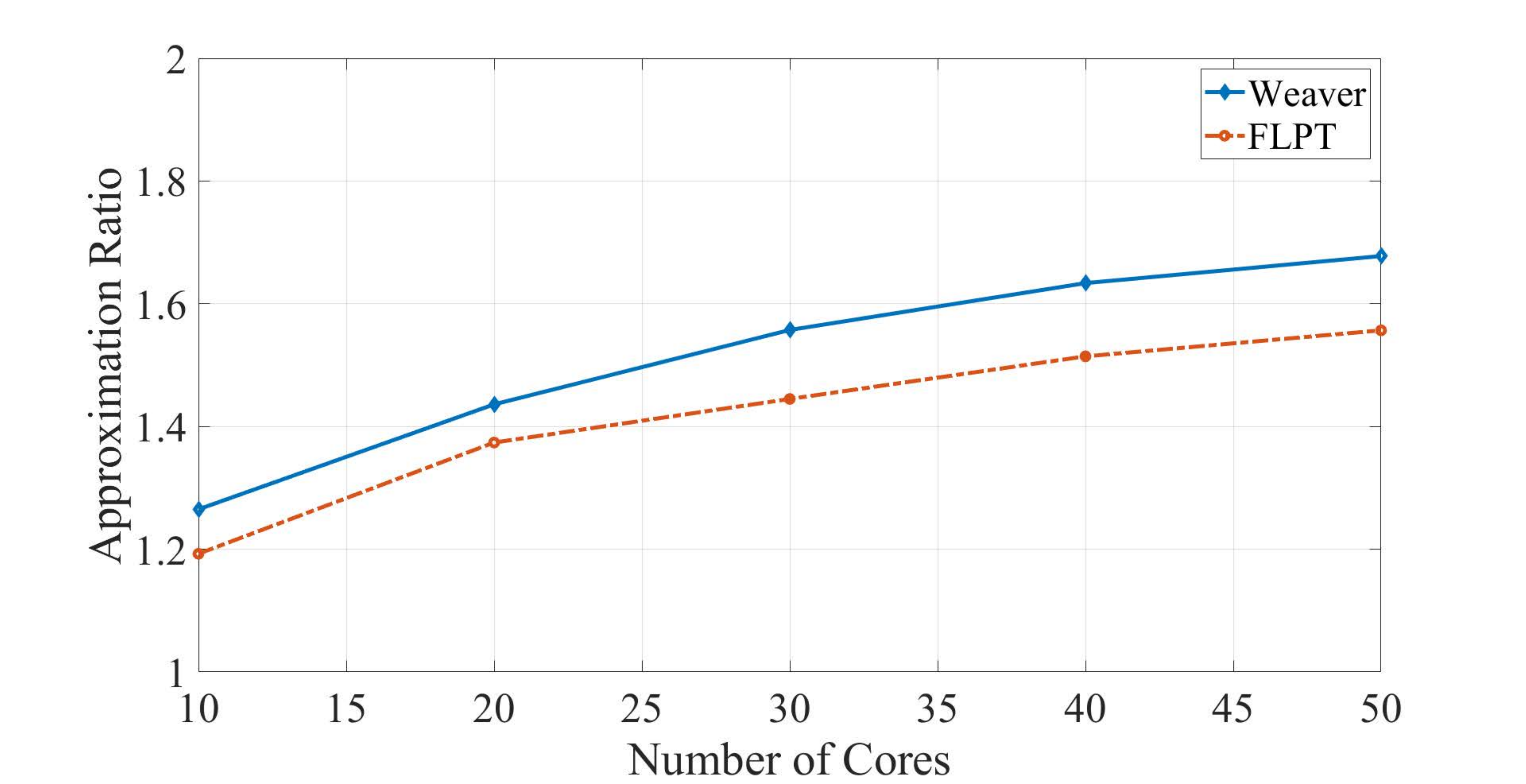}}
\end{minipage}
\label{fig:Divisible coflows from custom num of cores h}
}

\subfigure[The performance of algorithm: CLS.]{
\begin{minipage}[h]{0.4\textwidth}
\centering
{\includegraphics[width=3.4in]{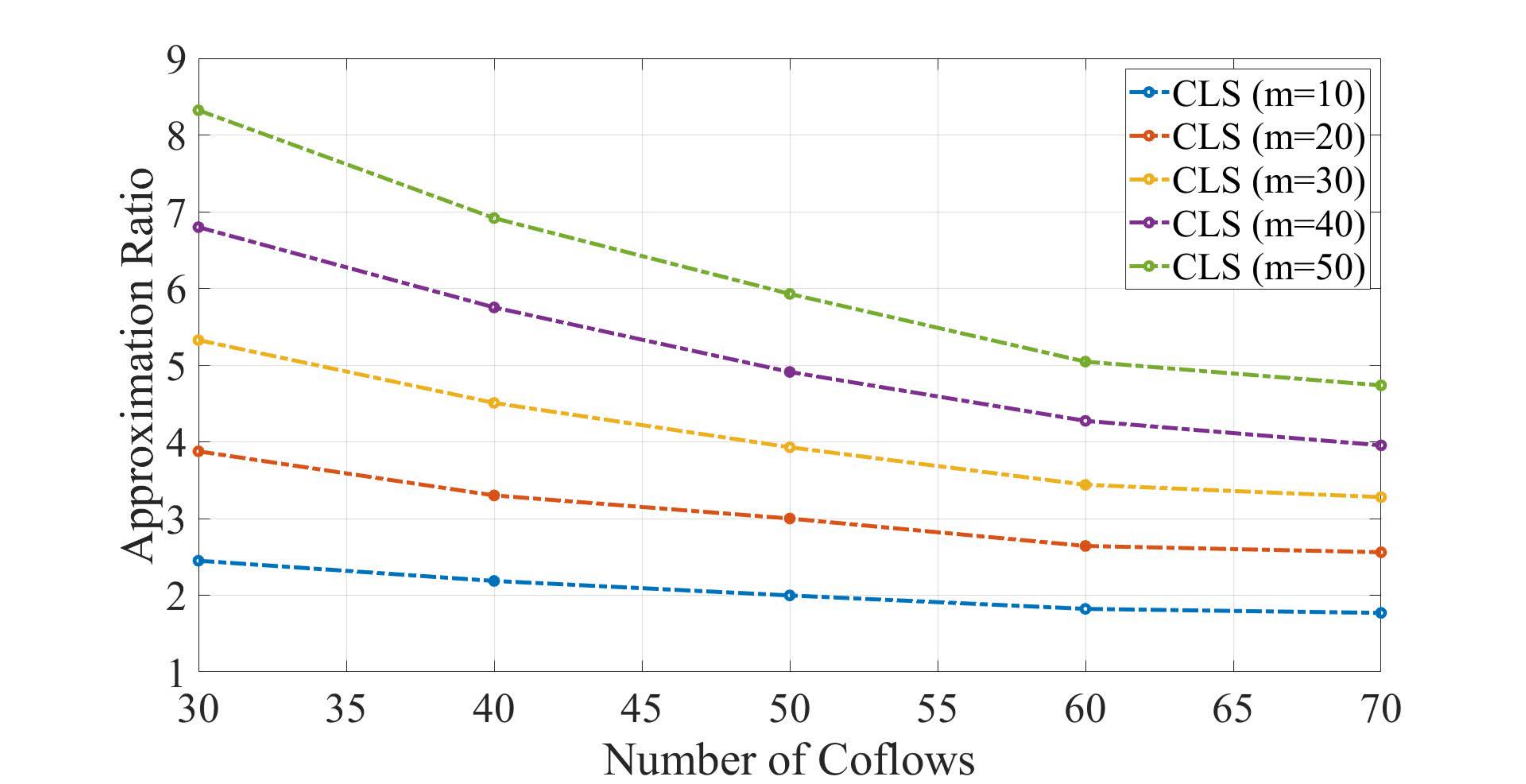}}
\end{minipage}
\label{fig:Indivisible coflows from custom num of cores h}
}
\caption{Approximation ratio of FLPT, Weaver, and CLS for distinct number of cores under synthetic traces in heterogeneous parallel networks.}
\label{fig:Coflows from custom num of cores h}
\end{figure}

\subsection{Simulation Results in Heterogeneous Parallel Networks}
The algorithm proposed in this paper can be adapted for scheduling in heterogeneous parallel networks. The pseudocode of the algorithms is provided in the APPENDIX. The following are the simulation results of the algorithms in heterogeneous parallel networks.

Figure \ref{fig:Coflows from custom num of cores h} illustrates the approximation ratio of FLPT and Weaver for different numbers of cores in heterogeneous parallel networks. The analysis is based on a synthetic trace comprising 25 coflows in five scenarios with varying numbers of network cores, while maintaining $N=10$ input/output links.
In Figure \ref{fig:Divisible coflows from custom num of cores h}, we set $h$ to 5, while in Figure \ref{fig:Indivisible coflows from custom num of cores h}, we set $h$ to 1. For each scenario, we generate 100 sample traces and report the average performance of the algorithms.

The findings demonstrate that as the number of cores increases, the approximation ratio also increases. Furthermore, as the number of cores increases, the performance gap between FLPT and Weaver widens. These observations align with the results observed in identical parallel networks.
In Figure \ref{fig:Indivisible coflows from custom num of cores h}, the approximation ratio of CLS increases with the number of cores and decreases with the number of coflows. This result is also consistent with the findings observed in identical parallel networks.

\begin{figure}[!h]
\centering
\subfigure[The box plot of algorithms: FLS, FLPT, and Weaver.]{
\begin{minipage}[h]{0.4\textwidth}
\centering
{\includegraphics[width=3.4in]{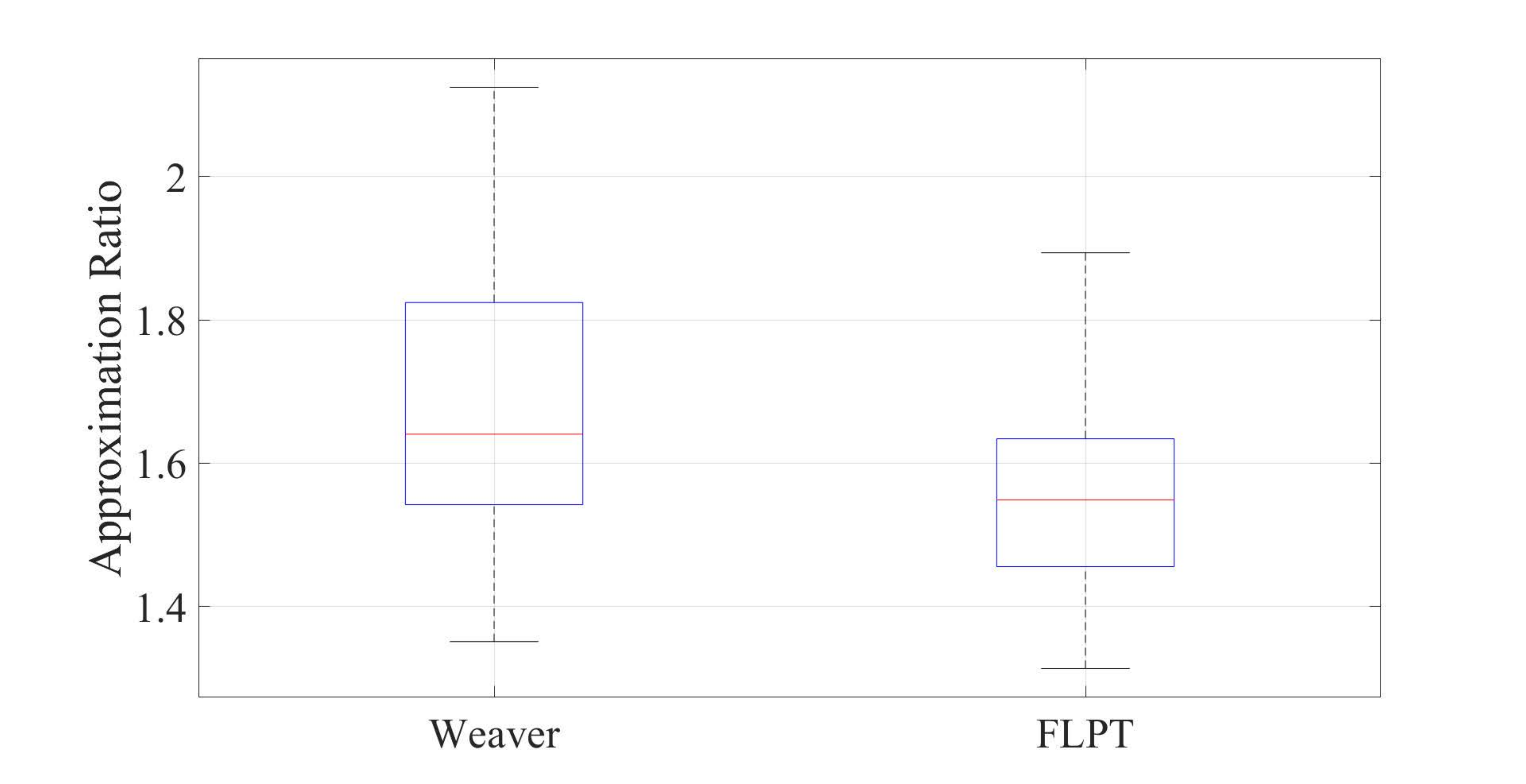}}
\end{minipage}
\label{fig:Divisible coflows from custom box plot h}
}

\subfigure[The box plot of algorithm: CLS.]{
\begin{minipage}[h]{0.4\textwidth}
\centering
{\includegraphics[width=3.4in]{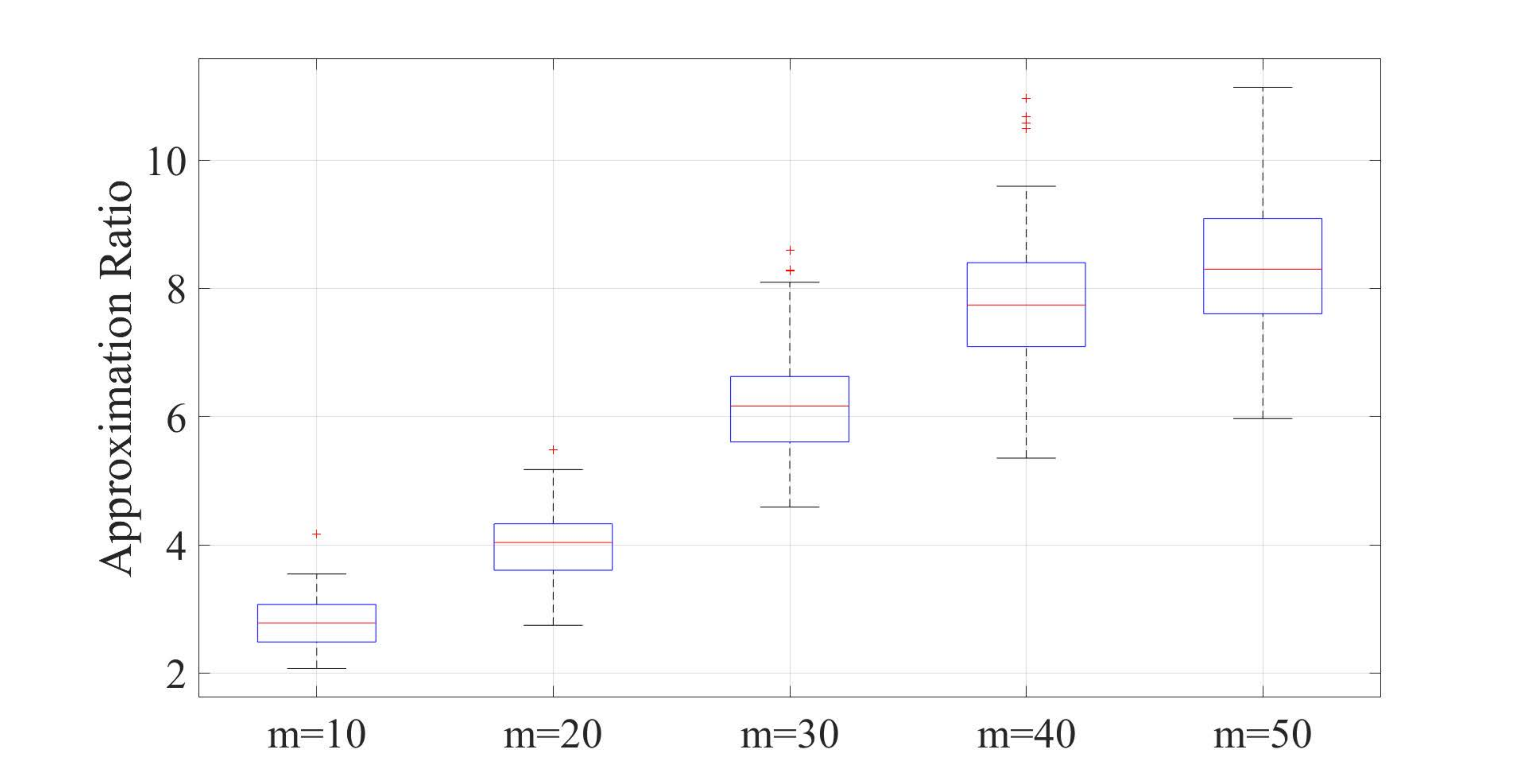}}
\end{minipage}
\label{fig:Indivisible coflows from custom box plot h}
}
\caption{The box plot of FLS, FLPT, Weaver, and CLS under synthetic traces in heterogeneous parallel networks.}
\label{fig:Coflows from custom box plot h}
\end{figure}

Figure \ref{fig:Coflows from custom box plot h} illustrates a box plot displaying the performance of FLS, FLPT, Weaver, and CLS in heterogeneous parallel networks. The synthetic trace used in this analysis involves $25$ coflows operating in a network with $m=50$ cores and $N=10$ input/output links.
For this scenario, we set $h$ as 5 and generated 100 sample traces for each algorithm. The box plot includes quartiles, maximum and minimum values for each algorithm. The results reveal that FLPT not only achieves a superior ratio compared to Weaver but also exhibits a narrower interquartile range. This is demonstrated in Figure \ref{fig:Divisible coflows from custom box plot h} and summarized in Table \ref{tableBoxPlot2}. In Figure \ref{fig:Indivisible coflows from custom box plot h}, it can be observed that the approximation ratio of CLS, as well as its interquartile range, increase as the number of cores grows. These results are consistent with those observed in identical parallel networks.

\begin{table}[!h]
\caption{The quartiles, maximum, and minimum of FLPT, Weaver, and CLS for box plots in Figure \ref{fig:Coflows from custom box plot h}.}
\centering
\renewcommand\arraystretch{1.5}
\begin{tabular}{||c|c|c|c||}
\hline
\textbf{\diagbox{Q{\&}M}{Algo}} & \textbf{FLPT} & \textbf{Weaver} & \textbf{CLS ($m=50$)}  \\ \hline
\textbf{$Q_1$} & 1.4556 & 1.5420 & 7.6049  \\ \hline
\textbf{$Q_2$} & 1.5486 & 1.6404 & 8.2986  \\ \hline
\textbf{$Q_3$} & 1.6340 & 1.8241 & 9.0896  \\ \hline
\textbf{Maximum} & 1.8936 & 2.1244 & 11.1368  \\ \hline
\textbf{Minimum} & 1.3136 & 1.3510 & 5.9693  \\ \hline
\end{tabular}
\label{tableBoxPlot2}
\end{table}

\begin{figure}[!h]
\centering
\subfigure[The performance of algorithms: FLPT and Weaver.]{
\begin{minipage}[h]{0.4\textwidth}
\centering
{\includegraphics[width=3.4in]{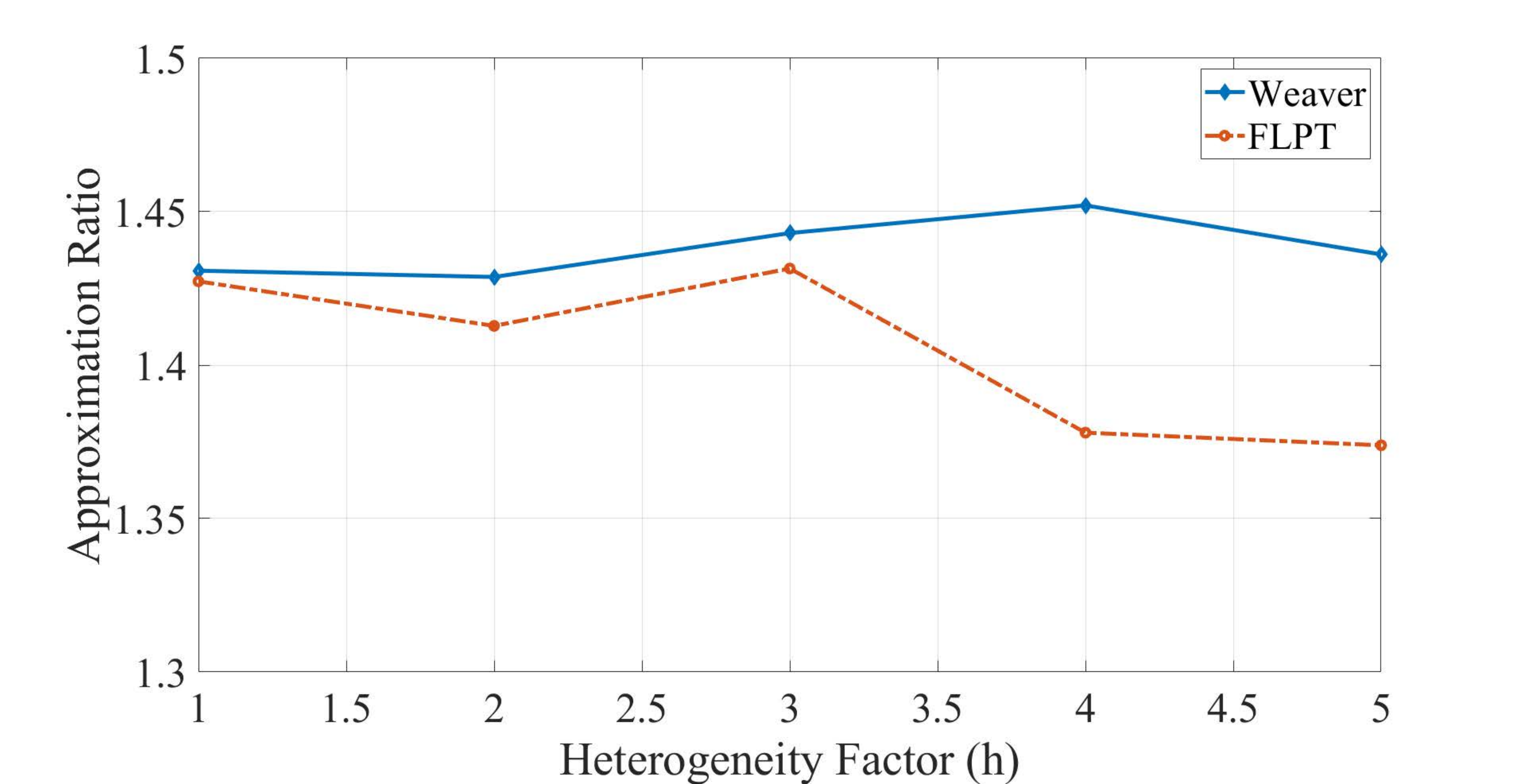}}
\end{minipage}
\label{fig:Divisible coflows from custom heterogeneous configuration 1}
}

\subfigure[The performance of algorithms: CLS.]{
\begin{minipage}[h]{0.4\textwidth}
\centering
{\includegraphics[width=3.4in]{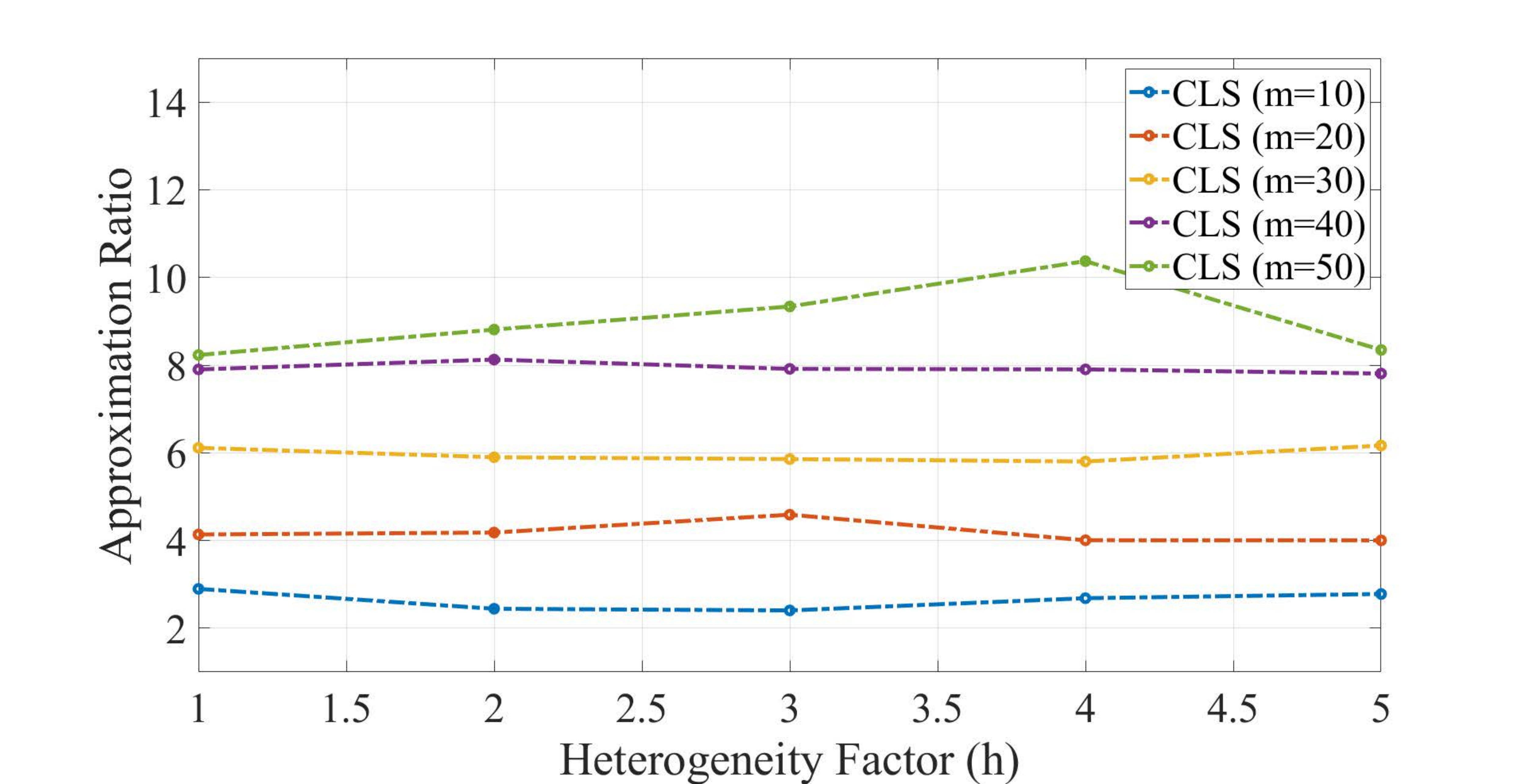}}
\end{minipage}
\label{fig:Divisible coflows from custom heterogeneous configuration 2}
}

\caption{Approximation ratio of FLPT, Weaver, and CLS for distinct heterogeneity factor under synthetic traces in heterogeneous parallel networks.}
\label{fig:Coflows from custom heterogeneous}
\end{figure}

Figure \ref{fig:Coflows from custom heterogeneous} displays the approximation ratio of FLPT, Weaver, and CLS for different heterogeneity factors in heterogeneous parallel networks. The analysis is based on a synthetic trace comprising 25 coflows and $m=20$ cores with $N=10$ input/output links in five scenarios with varying heterogeneity factors.

For each scenario, we generated 100 sample traces and reported the average performance of the algorithms. When the heterogeneity increases (with $h=1$), FLPT and Weaver exhibit similar performance. However, when the heterogeneity decreases (with $h=5$), the performance gap between FLPT and Weaver widens. Overall, FLPT outperforms Weaver in terms of performance.
Regarding CLS, there is no significant difference in performance based on heterogeneity factors. However, the number of cores has a greater impact on the performance of CLS.

\section{Conclusion}\label{section:conclusion}
This paper focuses on addressing the problem of coflow scheduling with the objective of minimizing the makespan of all network cores. We propose three algorithms that achieve approximation ratios of $3-\tfrac{2}{m}$ and $\tfrac{8}{3}-\tfrac{2}{3m}$ for the flow-level scheduling problem, and an approximation ratio of $2m$ for the coflow-level scheduling problem in identical parallel networks.

To evaluate the performance of our algorithms, we conduct experiments using both real and synthetic traffic traces, comparing them against Weaver's algorithm. Our experimental results demonstrate that our algorithms outperform Weaver's in terms of approximation ratio. Furthermore, we extend our evaluation to include heterogeneous parallel networks and find that the results align with those obtained in identical parallel networks.

As part of future work, we can explore additional constraints such as deadline constraints and consider alternative objectives such as tardiness objectives. Additionally, the problem of bandwidth allocation, which was not addressed in this paper, presents an interesting research direction. We anticipate further exploration of extended problems that arise from multiple parallel networks, as they play a crucial role in achieving quality of service.


%
\appendix
\section{Algorithms for Heterogeneous Parallel Networks}
In an extension of Algorithm \ref{FLPT}, the FLPT algorithm can be adapted to handle heterogeneous parallel networks. The modified algorithm, referred to as FLPT-h (Algorithm \ref{FLPT_h}), is outlined below. Let $s_h$ denote the speed factor of core $h$. The load of flow $(i, j, k)$ on core $h$ is given by $\tfrac{d_{i, j, k}}{s_h}$. Algorithm \ref{FLPT_h} closely resembles Algorithm \ref{FLPT}, with the only difference occurring at lines \ref{FLPT_h_update1}, \ref{FLPT_h_update2}, and \ref{FLPT_h_update3}. In line \ref{FLPT_h_update1}, we identify a core that minimizes the completion time of flow $(i, j, k)$. In lines \ref{FLPT_h_update2}-\ref{FLPT_h_update3}, the values of $load_{I}{(i,h)}$ and $load_{O}{(j,h)}$ are updated with $\tfrac{d_{i, j, k}}{s_h}$ if flow $(i,j,k)$ is assigned to core $h$.

\begin{algorithm}[!h]
 \caption{flow-longest-processing-time-first-scheduling-h} \label{FLPT_h}
 \begin{algorithmic}[1]
   \Require a set $\mathcal{F}$, which contains of all flows $(i,j,k)$, $\forall i \in \mathcal{I}, \forall j \in \mathcal{J}, \forall k \in \mathcal{K}$
   \State let $load_{I}{(i,h)}$ be the load on the $i$-th input port of the core $h$ \label{FLPT_h_init}
   \State let $load_{O}{(j,h)}$ be the load on the $j$-th output port of the core $h$
   \State let $\mathcal{A}_h$ be the set of flows allocated to the core $h$
   \State initialize both $load_{I}$ and $load_{O}$ to 0 and $\mathcal{A}_h = \emptyset$ for all $h \in \mathcal{M}$
   \For{each flow $(i, j, k) \in \mathcal{F}$ in non-increasing order of $d_{i, j, k}$, breaking ties arbitrarily} \label{FLPT_hfor1_s}
      \State $h^* = \arg\min_{h \in \mathcal{M}} \left\{load_{I}{(i,h)}+load_{O}{(j,h)}+\frac{d_{i, j, k}}{s_{h}}\right\}$ \label{FLPT_h_update1}
      \State $\mathcal{A}_{h^*} = \mathcal{A}_{h^*} \cup \{(i, j, k)\}$
      \State $load_{I}{(i,h^*)}=load_{I}{(i,h^*)} + \frac{d_{i, j, k}}{s_{h^*}}$ \label{FLPT_h_update2}
			\State $load_{O}{(j,h^*)}=load_{O}{(j,h^*)} + \frac{d_{i, j, k}}{s_{h^*}}$ \label{FLPT_h_update3}
   \EndFor \label{FLPT_hfor1_f}
   \State \textbf{return} $\left\{\mathcal{A}_h\right\}$ for $h \in \mathcal{M}$
 \end{algorithmic}
\end{algorithm}

In an extension of Algorithm \ref{CLS}, the CLS algorithm also can be adapted to handle heterogeneous parallel networks. The modified algorithm, referred to as CLS-h (Algorithm \ref{CLSH}), is outlined below. The load of coflow $k$ on core $h$ is given by $\tfrac{L_{i,k}}{s_h}$ and $\tfrac{L_{j,k}}{s_h}$ for all $i \in \mathcal{I}$and $j \in \mathcal{J}$, respectively. In line \ref{CLSH_update1}, we identify a core that minimizes the completion time of coflow $k$. In lines \ref{CLSH_update2}-\ref{CLSH_update3}, the values of $load_{I}{(i,h)}$ and $load_{O}{(j,h)}$ are updated with $\tfrac{L_{i,k}}{s_h}$ and $\tfrac{L_{j,k}}{s_h}$, respectively.

\begin{algorithm}[!h]
 \caption{coflow-list-scheduling-h} \label{CLSH}
 \begin{algorithmic}[1]
   \Require a set $\mathcal{K}$, which contains of all coflows 
   \State let $load_{I}{(i,h)}$ be the load on the $i$-th input port of the core $h$ \label{CLS_init}
   \State let $load_{O}{(j,h)}$ be the load on the $j$-th output port of the core $h$
   \State let $\mathcal{A}_h$ be the set of coflows allocated to the core $h$
   \State initialize both $load_{I}$ and $load_{O}$ to 0 and $\mathcal{A}_h = \emptyset$ for all $h \in \mathcal{M}$
   \For{each coflow $k \in \mathcal{K}$} \label{CLSfor1_s} 
      \State $h^* = \arg\min_{h \in \mathcal{M}} \max_{\forall i \in \mathcal{I}, \forall j \in \mathcal{J}} \left\{load_{I}{(i,h)}+\right.$  $\left.load_{O}{(j,h)}+\frac{L_{i,k}}{s_{h}}+\frac{L_{j,k}}{s_{h}}\right\}$ \label{CLSH_update1} 
      \State $\mathcal{A}_{h^*} = \mathcal{A}_{h^*} \cup \{k\}$
      \State $load_{I}{(i,h^*)}=load_{I}{(i,h^*)} + \frac{L_{i,k}}{s_{h^*}}$, $\forall i \in \mathcal{I}$ \label{CLSH_update2}
			\State $load_{O}{(j,h^*)}=load_{O}{(j,h^*)} + \frac{L_{j,k}}{s_{h^*}}$, $\forall j \in \mathcal{J}$ \label{CLSH_update3}
   \EndFor \label{CLSfor1_f}
   \State \textbf{return} $\left\{\mathcal{A}_h\right\}$ for $h \in \mathcal{M}$
 \end{algorithmic}
\end{algorithm}



%

\end{document}